%
%
%
%
%
%
%
\documentclass[aps,prl,superscriptaddress,twocolumn,amssymb,amsmath]{revtex4-1}
\usepackage{graphicx}
\usepackage{dcolumn}
\usepackage{bm}
\usepackage{braket}
\usepackage[dvipsnames]{xcolor}
\usepackage{comment}
\usepackage{tikz}
\usepackage{amsthm}
\usepackage{mathtools}
\usepackage{enumitem,kantlipsum}
\usepackage{soul}
\usetikzlibrary{backgrounds,fit,hobby,decorations,snakes,decorations.markings,positioning,shadows.blur,calc,calligraphy}
\tikzset{snake it/.style={decorate,decoration=snake,segment length=0.3cm}}

\usepackage{multirow} 



\DeclarePairedDelimiter\floor{\lfloor}{\rfloor}

\newtheorem{theorem}{Theorem}
\newtheorem{definition}{Definition}
\theoremstyle{remark}
\newtheorem*{remark}{Remark}

\definecolor{myred}{RGB}{236, 17, 0}
\definecolor{myblue}{RGB}{10, 88, 153}
\definecolor{mygreen}{RGB}{26, 152, 81}
\definecolor{myorange}{RGB}{236, 137, 0}
\definecolor{LightGray}{RGB}{220,220,220}
\definecolor{LinkColor}{RGB}{167,20,49}

\usepackage[colorlinks,citecolor=blue]{hyperref} 
\usepackage{mathptmx} 

\usepackage[framemethod=TikZ]{mdframed}
\mdfdefinestyle{MyFrame}{%
	linecolor=black,
	outerlinewidth=0.5pt,
	roundcorner=2pt,
	innertopmargin=10pt,
	innerbottommargin=10pt,
	innerrightmargin=10pt,
	innerleftmargin=10pt,
	backgroundcolor=gray!10!white,
}

\usepackage{float}
\usepackage{caption}

\begin{document}

\preprint{APS/123-QED}
\title{Device-Independent Randomness Amplification}

\author{Anatoly Kulikov}
\email{akulikov@phys.ethz.ch}
    \affiliation{Department of Physics, ETH Zurich, 8093 Zurich, Switzerland}
    \affiliation{Quantum Center, ETH Zurich, 8093 Zurich, Switzerland}

\author{Simon Storz}
    \affiliation{Department of Physics, ETH Zurich, 8093 Zurich, Switzerland}
    \affiliation{Quantum Center, ETH Zurich, 8093 Zurich, Switzerland}

\author{Josua D. Schär}
    \affiliation{Department of Physics, ETH Zurich, 8093 Zurich, Switzerland}
    \affiliation{Quantum Center, ETH Zurich, 8093 Zurich, Switzerland}

\author{Martin Sandfuchs}
    \affiliation{Department of Physics, ETH Zurich, 8093 Zurich, Switzerland}
    \affiliation{Quantum Center, ETH Zurich, 8093 Zurich, Switzerland}

\author{Ramona Wolf}
    \affiliation{Department of Physics, ETH Zurich, 8093 Zurich, Switzerland}
    \affiliation{Quantum Center, ETH Zurich, 8093 Zurich, Switzerland}
    \affiliation{Naturwissenschaftlich-Technische Fakultät, Universität Siegen, 57068 Siegen, Germany}

\author{Florence Berterotti\`{e}re}
    \affiliation{Department of Physics, ETH Zurich, 8093 Zurich, Switzerland}
    \affiliation{Quantum Center, ETH Zurich, 8093 Zurich, Switzerland}

\author{Christoph Hellings}
    \affiliation{Department of Physics, ETH Zurich, 8093 Zurich, Switzerland}
    \affiliation{Quantum Center, ETH Zurich, 8093 Zurich, Switzerland}

\author{Renato Renner}
    \affiliation{Department of Physics, ETH Zurich, 8093 Zurich, Switzerland}
    \affiliation{Quantum Center, ETH Zurich, 8093 Zurich, Switzerland}

\author{Andreas Wallraff}
    \affiliation{Department of Physics, ETH Zurich, 8093 Zurich, Switzerland}
    \affiliation{Quantum Center, ETH Zurich, 8093 Zurich, Switzerland}

\date{\today}

\begin{abstract}

Successful realization of Bell tests \cite{Bell1964,CHSH1969,Freedman1972,Aspect1982a,Rowe2001,Salart2008} has settled an 80-year-long debate \cite{Einstein1935,Bell2004}, proving the existence of correlations which cannot be explained by a local realistic model. Recent experimental progress allowed to rule out any possible loopholes in these tests \cite{Hensen2015,Giustina2015a,Shalm2015}, and opened up the possibility of applications in cryptography envisaged more than three decades ago \cite{Ekert1991}. A prominent example of such an application is device-independent quantum key distribution, which has recently been demonstrated \cite{Nadlinger2022,Zhang2022l,WenZhao2022}. One remaining gap in all existing experiments, however, is that access to perfect randomness is assumed \cite{Bierhorst2018,Liu2018,Zhang2022l,Nadlinger2022,Storz2024}. To tackle this problem, the concept of randomness amplification has been introduced \cite{Colbeck2012}, allowing to generate such randomness from a weak source---a task impossible in classical physics. In this work, we demonstrate the amplification of imperfect randomness coming from a physical source. It is achieved by building on two recent developments: The first is a theoretical protocol implementing the concept of randomness amplification within an experimentally realistic setup \cite{Kessler2020}, which however requires a combination of the degree of Bell inequality violation (S-value) and the amount of data not attained previously. The second is experimental progress enabling the execution of a loophole-free Bell test with superconducting circuits \cite{Storz2023}, which offers a platform to reach the necessary combination. Our experiment marks an important step in achieving the theoretical physical limits of privacy and randomness generation \cite{Ekert2014}.

\end{abstract}

\pacs{Valid PACS appear here}

\maketitle

Randomness is a term that immediately conjures associations, as it is an idea deeply ingrained in our understanding. Yet, in a scientific context, we necessitate a precise definition to assess whether a protocol indeed produces randomness. In this context, randomness is defined as being \emph{fundamentally unpredictable}, which means that the laws of physics forbid the prediction of its values. This property extends beyond statistical correlations in the bit string---it is an inherent characteristic of the process employed to generate randomness \cite{Frauchiger2013}. For instance, consider a book filled with numbers generated by a high-quality random number generator \cite{Randomnumbers}. While these numbers may exhibit strong statistical properties (e.g., not containing any patterns), they are unsuitable for use in cryptography, as anyone with access to the book could easily break the encryption. Consequently, certifying randomness always involves certifying the generation process.

While the assumption of access to perfect randomness is prevalent in many tasks, practical availability raises questions. This issue is exemplified by investigations into public RSA \cite{Rivest1978} keys: due to inadequate randomness, a non-negligible proportion of publicly available keys on the internet exhibit common factors, rendering them susceptible to efficient factorization, thus undermining the security of RSA \cite{Heninger2012,Lenstra2012}.

Conventional random number generators, rooted in classical physical processes, grapple with a foundational concern---the potential for adversaries to predict their outputs by scrutinizing the microscopic degrees of freedom, thereby eroding their essential unpredictability.

Quantum-mechanical processes, on the other hand, feature innate randomness and therefore offer a natural ground to build such devices. With the growing access to single and controllable quantum systems on a variety of platforms and prospects of perfectly random measurement outcomes, a plethora of quantum random number generators (QRNGs) were proposed and experimentally demonstrated \cite{Herrero-Collantes2017}, some of which are now commercially available. 
However, this approach is not immune to attacks that exploit flawed implementations, including issues like noise, detector dead times, and dark counts. Whether these design flaws are unintentional---arising from the limits of our engineering capabilities---or deliberately introduced by a powerful adversary, they compromise the randomness produced by the setup \cite{Dhara2014, Hurley2020}.

To overcome these issues, methods have been introduced \cite{Mayers2003,Colbeck2009,Pironio2010,Christensen2013c,Bierhorst2018,Liu2018} which certify the random nature of a bit string based on the quantum mechanical feature of non-local correlations, demonstrated through the violation of Bell-like inequalities~\cite{Bell1964,CHSH1969,Freedman1972,Aspect1982a,Rowe2001,Salart2008}. This allows certain protocols to be \textit{device-independent}, i.e., explicitly discarding the requirements of the exact knowledge of the employed device, and assumptions of its strict correspondence to a theoretical model. These protocols therefore enable certifying the randomness of a produced output while even allowing the employed system to be malicious.

A discerning reader may observe what seems to be a circular argument here: Executing a Bell test already necessitates perfect input randomness to make choices among different measurement settings~\cite{Bierhorst2018,Liu2018,Zhang2022l,Nadlinger2022,Storz2024}. Being precise, what can be accomplished with this approach is the \emph{expansion} of randomness---more randomness is obtained than initially used \cite{Shalm2021,Liu2021}. However, this does not solve the initial issue of generating such randomness in the first place. Fortunately, quantum mechanics offers an even more intriguing possibility: perfect randomness can be created even if we only require that the inputs for a protocol exploiting non-locality are not entirely deterministic. This is known as \emph{randomness amplification} \cite{Colbeck2012}.

Naturally, achieving perfect randomness from such weak assumptions does come with its challenges. Initial theoretical proposals featured impractical properties, such as a large required number of devices, no noise tolerance, or a very small amount of output bits \cite{Colbeck2012,Gallego2013,brandao2016}. Furthermore, to ensure the validity of such a conclusion solely based on fundamental principles, in a practical setting a set of \textit{loopholes} has to be closed in the experimental implementation. We discuss this aspect in the Methods. Experimentally, even executing a Bell test free of loopholes, a milestone that was only reached in the past decade \cite{Hensen2015,Shalm2015,Giustina2015a}, is insufficient: 
non-vanishing improvement in randomness quality requires a combination of a Bell inequality violation and an amount of generated data which was not achieved in previously reported works. Given these challenges, randomness amplification remains an unattained experimental goal---until now.

\begin{figure}
  \begin{center}
  \newcommand{\mybox}[1]{
    \draw[fill=black,opacity=0.4,draw=white,thick] (-1+#1/2,-1+#1/2) -- (1+#1/2,-1+#1/2) -- (1+#1/2,1+#1/2) -- (-1+#1/2,1+#1/2) -- cycle;
    \draw[fill=black,opacity=0.6,draw=white,thick] (-1-#1/2,-1-#1/2) -- (1-#1/2,-1-#1/2) -- (1+#1/2,-1+#1/2) -- (-1+#1/2,-1+#1/2) -- cycle;
    \draw[fill=black,opacity=0.4,draw=white,thick] (1-#1/2,-1-#1/2) -- (1+#1/2, -1+#1/2) -- (1+#1/2,1+#1/2) -- (1-#1/2,1-#1/2) -- cycle;
    \draw[fill=black,opacity=0.4,draw=white,thick] (-1-#1/2,-1-#1/2) -- (-1+#1/2, -1+#1/2) -- (-1+#1/2,1+#1/2) -- (-1-#1/2,1-#1/2) -- cycle;
    \draw[fill=black,opacity=0.2,draw=white,thick] (-1-#1/2,1-#1/2) -- (1-#1/2,1-#1/2) -- (1+#1/2,1+#1/2) -- (-1+#1/2,1+#1/2) -- cycle;
    \draw[fill=black,opacity=0.4,draw=white,thick] (-1-#1/2,-1-#1/2) -- (1-#1/2,-1-#1/2) -- (1-#1/2,1-#1/2) -- (-1-#1/2,1-#1/2) -- cycle;
}

\begin{tikzpicture}[
    scale = 0.8,
    boxnode/.style={
        align=center,
        draw=black,
        minimum height=1.0cm,
        fill=myblue!60,
        rounded corners=0.1cm,
        blur shadow={shadow blur steps=15}
    },
]

    \begin{scope}[shift={(0, +2)},yscale=0.6]
        \mybox{0.5};
    \end{scope}

    \begin{scope}[shift={(0, -2)},yscale=0.6]
        \mybox{0.5};
    \end{scope}

    \node (X) at (0.0, +4.0) {};
    \draw[thick,->,>=stealth] ([yshift=-0.2cm] X.south) -- node[right=0.05cm,pos=0.6]{$\mathbf{X}$} (0.0, 2.8);
    \draw[thick,snake it] (0, -1) -- (0, 1);
    \begin{scope}[on background layer]
        \draw[fill=brown!15,draw=none,rounded corners=0.1cm] (-1.5, -3.5) -- (1.5, -3.5) -- (1.5, 3.5) -- (-1.5, 3.5) -- node[right=0.1cm, text=brown]{$\lambda$} cycle;
    \end{scope}

    \node (Y) at (0.0, -4.0) {};
    \draw[draw=none, fill=blue!25, rounded corners=0.1cm] (-1.5, -3.7) -- (1.5, -3.7) -- (1.5, -4.4) -- (-1.5, -4.4) -- cycle;
    \node at (0, -4.05) {0 1 1 1 1 0 1 $\ldots$};
    \draw[very thick, decorate,decoration={brace, raise=0.1cm, amplitude=0.15cm}] (1.5, -4.4) -- node[below=0.3cm] {SV source} (-1.5, -4.4);
    \draw[thick,->,>=stealth] ([yshift=0.2cm] Y.north) -- node[right=0.05cm,pos=0.6]{$\mathbf{Y}$} (0.0, -2.8);/

    \node[thick,draw=black,minimum height=4cm,minimum width=1.3cm,fill=gray!50] (Ext) at (4.55, 0) {$\mathrm{Ext}$};

    \draw[thick] (1.5, +2.0) -- node[above]{$\mathbf{A}$} (2.5, +2.0) -- (2.5, -2.0) -- node[below]{$\mathbf{B}$} (1.5, -2.0); 
    \draw[thick,->,>=stealth] (2.5, 0) -- node[above]{$\mathbf{AB}$} (Ext.west);

    \node (Z) at (Ext.north|-X.center) {};
    \draw[thick,->,>=stealth] ([yshift=-0.2cm] Z.south) -- node[right=0.05cm]{$\mathbf{Z}$} ([yshift=0.0cm] Ext.north);
    \draw[thick,->,>=stealth] (Ext.east) -- ([xshift=1.0cm] Ext.east); 

    \node[fill=violet!25, rounded corners=0.1cm, anchor=west] (BitsOut) at ([xshift=1.2cm] Ext.east) {1 1 0 1 1 $\ldots$};
    \draw[very thick, decorate, decoration={brace, raise=0.1cm, amplitude=0.15cm}] (BitsOut.north west) -- node[above=0.3cm]{$\mathbf{K}$} (BitsOut.north east);

    \begin{scope}[on background layer]
        \node (tmp) at (0, 4.4) {};
        \node (tmp2) at (0, 4.05) {};
        \draw[fill=red!25,draw=none,rounded corners=0.1cm] (-1.5, 3.7) -- (5.6, 3.7) -- (5.6, 4.4) -- (-1.5, 4.4) -- cycle;
            \node (Bits1) at (0, 4.05) {0 1 1 1 0 1 0 $\ldots$};
        \node[anchor=west] (Bits2) at (Ext.west|-tmp2) {0 1 1 1 $\ldots$};

        \draw[very thick, decorate, decoration={brace, raise=0.1cm, amplitude=0.15cm}] (-1.5, 4.4) -- node[above=0.3cm]{SV source} (tmp-|Bits2.east);
    \end{scope}

\end{tikzpicture}
  \caption{\textbf{Schematics of the randomness amplification protocol.} Two public imperfect SV sources are used to provide input randomness (strings $\mathbf{X}$ and $\mathbf{Y}$) to a two-node untrusted black-box device $\lambda$. Performing verifiably non-local measurements, we obtain output strings \textbf{A} and \textbf{B}. We do not make any assumptions about the inner workings of the device $\lambda$ and attribute all possible side-information to the adversary.
  If the output strings pass a pre-defined quality threshold (see main text for the description), we run a randomness extractor with an extra input (string \textbf{Z}) from the public SV source to generate the amplified private random string $\mathbf{K}$.
  }
  \label{fig:protocol}
  \end{center}
\end{figure}


In this work, we report an experimental realization of randomness amplification. Specifically, using the resource of non-local correlations certified by a loophole-free violation of a Bell inequality, and a public, imperfect source of randomness, we obtain near-perfect private randomness as an output. More precisely, two parties A and B with access to a biased public Santha-Vazirani (SV) source (see a formal definition below) perform Bell-type correlation measurements on their quantum systems to amplify the randomness of the initial seed using a classical extractor, see Fig.~\ref{fig:protocol}.

We base the implementation on a solid-state circuit QED platform~\cite{Blais2021} and a similar setup as the recently reported loophole-free Bell test \cite{Storz2023}. The main ingredient of the experimental system---a two-node untrusted device $\lambda$---is realised by two dilution refrigerators connected via a modular cryogenic link spanning a distance of 30 meters, see Ref.~\cite{Schaer2023}. We encode quantum information in two transmon qubits fabricated on separate, nominally identical quantum devices, operated in the cryostats at sites A and B at temperatures of about 15~mK. In addition to control and readout circuitry, each qubit is locally coupled to an on-chip transfer resonator and a Purcell filter. For the generation of remote entanglement, we connect the transfer resonator circuits of the devices at both sites through a rectangular aluminium waveguide \cite{Kurpiers2017,Magnard2020}. 
The two-node device $\lambda$ is further described in Supplementary Information section I. 

We access imperfect random bits from quantum random number generators introduced in Ref.~\cite{Abellan2015}, which provide on-demand random bits based on the process of laser phase diffusion. Single-shot readout of the state of the quantum devices generates the protocol output strings $\mathbf{A}$ and $\mathbf{B}$. Both the QRNGs and the analog-to-digital converters (ADCs) recording the measurement results form part of the trusted part of the setup (and are therefore outside of $\lambda$).

In the following, we first introduce the theoretical protocol for device-independent randomness amplification with the requirements for ``black box'' devices (i.e., an abstract interface), and then detail the experiment based on our hardware realizing it.


Relaxing the requirement for perfect input randomness naturally requires a measure of ``imperfection''. Imagine an observer, also called an adversary, who attempts to guess the output of a source producing a single bit. She would assign to each value $b \in \{0,1\}$ some probability $\mathrm{Pr}[b|e]$, where $e$ denotes any information which she can use to help her predict the value of the bit. More generally, we consider a sequence of bits $b_{1}, \ldots, b_{n}$. After having observed the first $i-1$ bits, the adversary's knowledge is characterized by $\mathrm{Pr}[b_{i}|b_{1}, \ldots,  b_{i-1},e]$.
The Santha-Vazirani (SV) model then considers sources for which none of the bits are completely determined, i.e., $\frac{1}{2} - \mu \leq \mathrm{Pr}[b_{i}|b_{1}, \ldots, b_{i-1},e] \leq \frac{1}{2} + \mu$ holds for all $i$ and $b_{1}, \ldots, b_{i}, e$. The parameter $\mu$ is called the bias of the source and $\mu = 0$ corresponds to a sequence of uniformly random bits. In our setup we do not have a single source but two sources, each producing a sequence of bitstrings. We then require the above condition to be satisfied for the bitstring containing the output of both sources (in the order in which they were produced). In particular, this means that the two sources may be correlated, as long as the condition above is satisfied.

The device-independent randomness amplification protocol we have implemented inherits from the recent proposal in Ref.~\cite{Kessler2020}. It requires a two-node system capable of violating a Bell-like measurement dependent locality (MDL) inequality and two imperfect, possibly correlated, public sources of randomness as per Fig.~\ref{fig:protocol}.
The MDL inequality can be thought of as a special type of Bell inequality which incorporates the input imperfections (see Methods). Following the device-independent framework, where no assumptions are made about the physical device, we outline the interface requirements of the protocol in Box I.

\begin{table}
	\captionsetup{name=Box}
	\begin{mdframed}[style=MyFrame]
		\small
		{\large \textsf{Box I: Protocol (abstract)}}\\[0.2cm]
		The protocol is defined in terms of the following parameters:\\[0.1cm]
		\begin{tabular}{rl}
			$n\in\mathbb{N}$: & number of protocol trials\\
			$\mu$: & bias of the two $\mu-$SV sources\\
			$m\in\mathbb{N}$: & length of the final random bit string \\
            $\mathrm{Ext}$:& $\{0,1\}^{2n} \times \{0,1\}^{d} \rightarrow \{0,1\}^{m}$ \\
            & the two-source randomness extractor.
		\end{tabular}\\[0.1cm]
		\hrule\vspace{2pt}
		\begin{enumerate}[leftmargin=*,topsep=0pt]
			\item \textbf{Measurement:} For each trial $i\in\{1,2,\dots,n\}$:
			\begin{enumerate}[noitemsep,topsep=0pt]
				\item Sample $x_i, y_i \in \{0, 1\}$ from the two $\mu-$SV sources
				\item Feed the devices with the inputs $x_i$, $y_i$ and record the respective outputs $a_i, b_{i} \in\{0,1\}$.
			\end{enumerate}
			\item \textbf{Parameter estimation:} Use the inputs $\mathbf{X}$, $\mathbf{Y}$ and the outputs $\mathbf{A}$, $\mathbf{B}$ to evaluate the observed value of the MDL inequality $S_{\mu,\mathrm{obs}}$. If $S_{\mu,\mathrm{obs}} < 0$, the protocol aborts and no random bits can be produced.
			\item \textbf{Randomness extraction:}
			    \begin{enumerate}[noitemsep,topsep=0pt]
			        \item Draw another bit string $\mathbf{Z} \in \{0, 1\}^{d}$ from one of the sources.
                    \item Apply the extractor to produce the final output $\mathbf{K}=\mathrm{Ext}(\mathbf{AB}, \mathbf{Z}$) of length $m$.
			    \end{enumerate}
		\end{enumerate}
	\end{mdframed}

    \vspace*{-1.7\baselineskip}
\end{table}

The protocol begins by running a Bell-type experiment where the amount of produced randomness is quantified by the observed MDL violation $S_{\mu,\mathrm{obs}}$ \cite{Kessler2020}. Applying a randomness extractor to the output of the device $\lambda$ and extra seed $\mathbf{Z}$ from the public SV source then produces the final bit string $\mathbf{K}$ which can be made arbitrarily close to perfectly random (characterized by a security parameter $\varepsilon$). To guarantee the quality of the bit string $\mathbf{K}$, the parameters in Box I need to be chosen appropriately. The full details of how to pick these parameters are discussed in Supplementary Information section VI.

\begin{figure}
    \centering
    \includegraphics[width=\columnwidth]{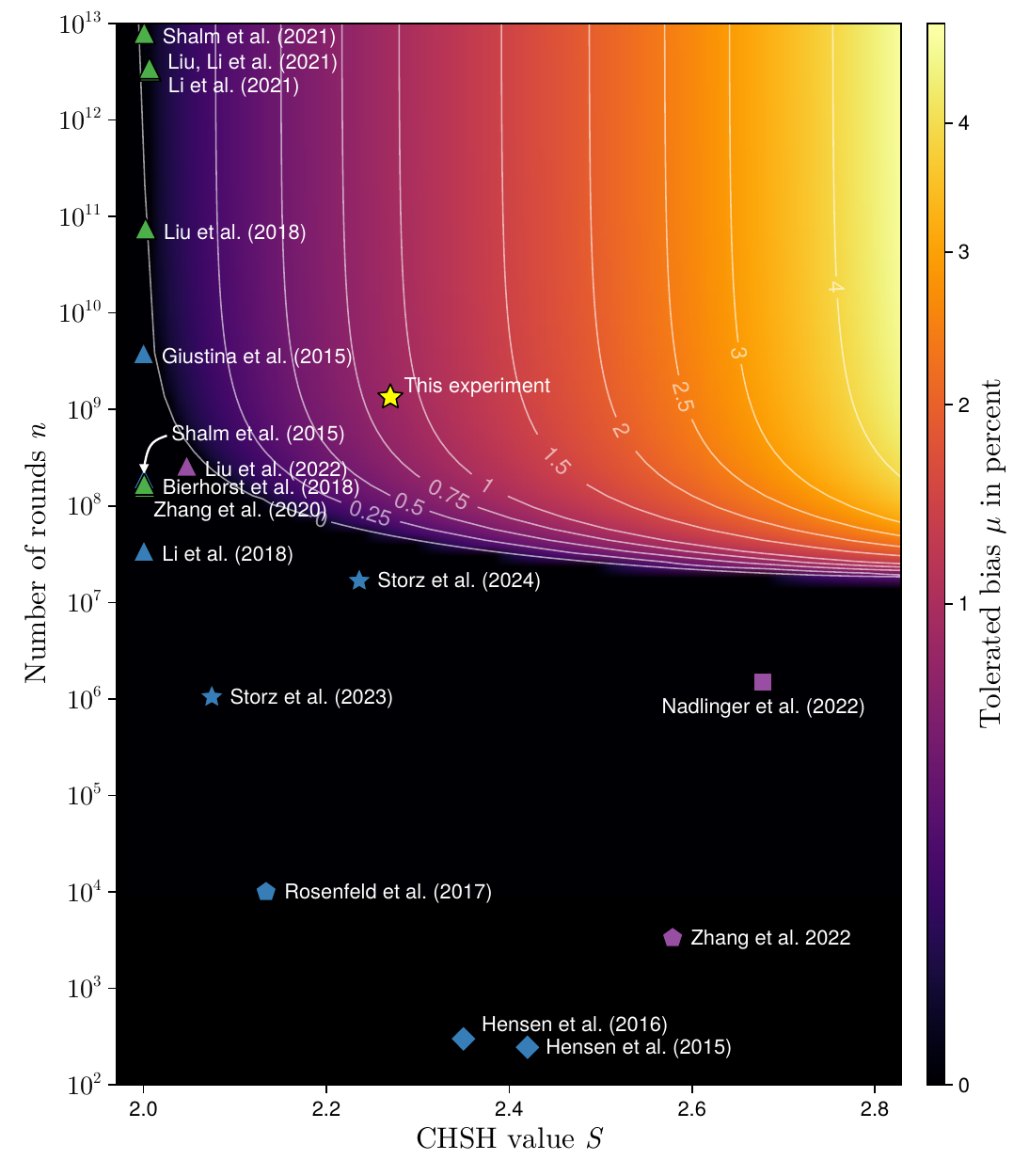}
    \caption{\textbf{Performance of the employed device-independent randomness amplification protocol under different experimental conditions.} The plot shows the trade-off between CHSH violation $S$ and the number of rounds $n$ required to achieve at least a million bits of output. The noise model assumed here is depolarizing noise. Note that this plot is subject to future theoretical advancements; for instance, improved extractors could allow to amplify even sources with higher bias $\mu > 4\%$. We further list previously published experiments. The marker shape represents the experimental platform: Photons (triangles), NV-centers (rhombi), neutral atoms (squares), trapped ions (pentagon) and superconducting circuits (stars). The color represents the type of experiment: Device-independent randomness expansion (green), loophole-free Bell test (blue), device-independent quantum key distribution (purple) and device-independent randomness amplification (yellow). The listed QKD experiments do not close the locality loophole.}
    \label{fig:S_vs_n_theory}
\end{figure}

\begin{figure*}
  \begin{center}
  \includegraphics[width=2\columnwidth]{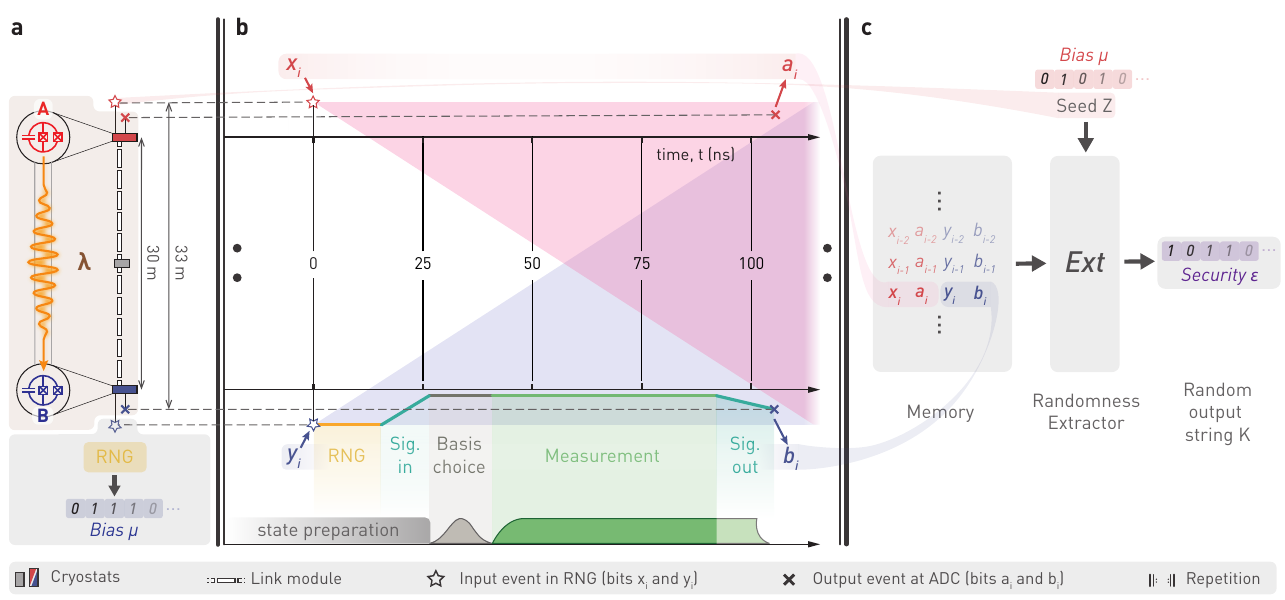 }
  \caption{\textbf{Full experimental protocol.} \textbf{a}, Experimental setup. Two dilution cryostats, separated by 30 meters, contain a superconducting qubit each, forming the nodes A and B. In combination, they represent the untrusted device $\lambda$. Stars mark the location of input QRNGs, and crosses mark signal acquisition ADCs. The distance between the starting event and the end of the acquisition is therefore 33 meters. The inset on the bottom left corner highlights one of the two RNGs, producing a string of random bits with bias $\mu$. \textbf{b}, Space-time diagram of an experimental trial \textit{i}. Each trial starts from sampling a random bit ($x_i$, (red) for Alice, and $y_i$, (blue) for Bob) from the QRNGs and ends with a discretized measurement result ($a_i, b_i$). Individual experimental segments are indicated below  by different colours: the RNG random number production time, the duration of the basis choice pulse and the measurement, and the duration of the signals entering and leaving the cryostats. \textbf{c}, Schematic representation of the post-processing routine, yielding a certified private random string with security $\epsilon$, see text for details.}
  \label{fig:protocol_experiment}
  \end{center}
\end{figure*}

We note that the length $m$ of the private random output bit string produced by the extractor $\mathrm{Ext}$ is determined by the tolerated source bias $\mu$, the number of trials $n$ and the observed MDL violation $S_{\mu,\mathrm{obs}}$, see Figure~\ref{fig:S_vs_n_theory}.
Specifically, in this experiment we aim to extract a string of the size exceeding 1~Mbit, partially motivated by this length being a requirement of the input randomness to recently reported device-independent quantum key distribution (DI-QKD) routines (more on this point in the outlook discussion). To achieve this, considering the achieved quality of the experiment (corresponding to $S \approx 2.27$), the required security parameter $\varepsilon$ and the specific extractor employed (see Methods), we require on the order of $n=10^9$ trials in the experiment, entering a previously unmatched experimental regime (Figure~\ref{fig:S_vs_n_theory}). This volume of trials ensures that the finite size of the recorded data set does not substantially lower the tolerance of our protocol against the input bias $\mu$.


Our experimental implementation is shown in Fig.~\ref{fig:protocol_experiment} and follows the abstract interface in Box I.  
Two third-party commercial QRNGs \cite{Abellan2015} serve as imperfect public sources of randomness. In each trial $i$, after performing active all-microwave reset to bring both qubits into the ground state \cite{Magnard2018}, we prepare an entangled $\ket{\Psi^+}$-Bell state using a deterministic remote entanglement generation scheme \cite{Cirac1997,Kurpiers2018}, based on the exchange of a single microwave photon between the remotely connected qubits. Sampling a random input bit $x_i$ ($y_i$) at node A (B) from the public $\mu-$SV source on each site marks the beginning of the time-critical part of the randomness amplification protocol. At each node, as in Ref.~\cite{Storz2023}, the random input bit controls the state of a microwave switch which conditionally passes a $\pi/2$ microwave rotation pulse to the qubit, implementing the measurement basis selection. Immediately afterwards, we start the measurement of the state of the local qubit by sending a microwave pulse to the quantum device, interacting with the transmon qubit~\cite{Walter2017}. We then amplify the readout signal and route it out of the cryostat to digitize it at room temperature using an analog-to-digital converter (ADC) and post-process it using an FPGA. We consider the measurement result to be fixed once the last part of the readout signal reaches the input of the ADC, employing similar considerations as discussed in Ref.~\cite{Storz2023}. With this scheme, we achieve single-shot readout fidelity of about 98$\%$ in only 50~ns for both qubits.

After the pre-defined number of trials $n$, we collect the input-output statistics of each round ($x_i,y_i,a_i,b_i$) to calculate the MDL and Clauser–Horne–Shimony–Holt (CHSH) inequalities, see Methods. Taking into account the data of each trial closes the fair-sampling loophole \cite{Garg1987, Eberhard1993}, and not assuming the individual trials are independent and identically distributed closes the memory loophole \cite{Larsson2014}.


We perform randomness amplification based on $n=20 \cdot 2^{26} = 1,342,177,280$ trials acquired over a total time of about 9~hours.
We split the data acquisition in 20 blocks of $2^{26}$ ($\sim 67$~million) trials each. Trials in each block are run back-to-back with the repetition time of $20~\mu$s, i.e. a rate of $50$~kHz. After acquiring a data block, we re-calibrate the readout acquisition weights \cite{Walter2017} and perform the verification of the system's timings. Furthermore, as closing the locality loophole is of paramount importance for the device-independent nature of the protocol, we implement a real-time monitoring of the setup synchronisation during the device-independent randomness amplification execution. Specifically, we introduce a series of oscilloscope measurements observing the arrival time of auxiliary pulses from various devices in the setup, hence ensuring the absence of timing jitter of control pulses, triggers and readout signal. In Supplementary Information sections II and III we discuss the triggering and timing verification schemes and the system-wide timing stability data during the protocol execution.

\begin{figure}
  \begin{center}
    \includegraphics[width=\columnwidth]{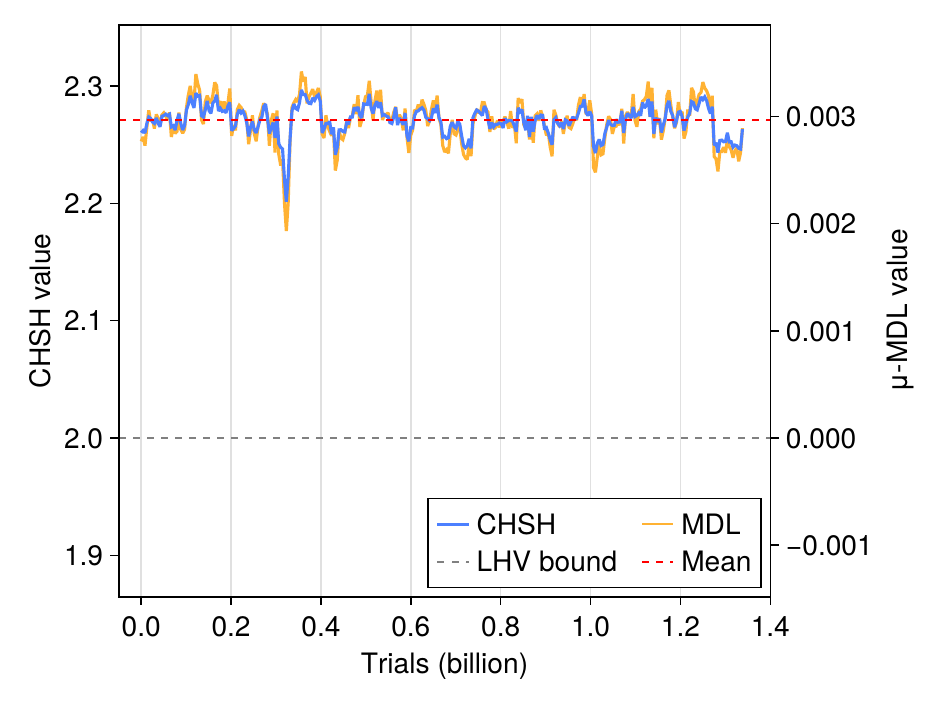}
    \caption{\textbf{S-value \emph{vs.} iteration.} Quality of the data and stability visualization. Each point corresponds to $\approx$ 4 mil ($2^{22}$) experimental trials. The red horizontal line indicates the average over the entire dataset, the gray horizontal line represents the classical bound of $S=2$ or $S_{\mu}=0$.
  }
  \label{fig:S_vs_time}
  \end{center}
\end{figure}

Before the protocol begins, we must pick a tolerated bias $\mu$. The largest $\mu$ that one can pick depends on the expected MDL inequality violation $S_{\mu,\mathrm{obs}}$ which in turn depends on the specific noise in the experiment (see Methods).
For our experiment we see that we can pick $\mu$ to be slightly above $0.75\%$ (see Fig.~\ref{fig:S_vs_n_theory}) and therefore we pick $\mu = 0.75\%$; For comparison, the excess predictability specified by the manufacturer of our QRNGs is below $\varepsilon < 1\cdot 10^{-5}$ \cite{Abellan2015}. In our experiment, we measure an average CHSH value of $S_{\mathrm{obs}}=2.271$ and an average $\mu$-MDL value of $S_{\mu,\mathrm{obs}} = 0.00296$, see Fig.~\ref{fig:S_vs_time} for the dependence of the $\mu$-MDL and CHSH inequality violations over time.
As all the loopholes are closed, and the value $S_{\mu,\mathrm{obs}}$ exceeds the MDL inequality threshold $S_{\mu, \mathrm{crit.}}=0$, the obtained dataset meets the formal requirements of the abstract device-independent randomness amplification protocol (see Box I). We are hence able to proceed to randomness extraction.

Even though the outputs $\mathbf{AB}$ of the Bell-type experiment possess an inherent amount of randomness \cite{Ekert1991, Acin2007, Pironio2010}, they may be strongly correlated with each other or with side-information held by an adversary. Consequently, they do not satisfy the definition of $\varepsilon$-randomness which loosely speaking requires the outputs to be distinguishable from perfectly uniform with probability at most $\varepsilon$ (see Methods). To break these correlations, we apply a randomness extractor which processes two sources of imperfect randomness into a new string of nearly perfect randomness. The technical requirements for the extractor to work are outlined in the Methods. 

Since the original work of Santha and Vazirani \cite{Santha1986}, randomness extractors have been the focus of intense study in classical information theory (see \cite{Shaltiel2011} for a review). For example, it has been shown that any two-source extractor which is secure against classical side-information is also secure against quantum side-information given that the size of the output is sufficiently reduced \cite{Arnon-Friedman2016}. This means that, at least in principle, we could use any of the classical two-source extractors. However, constructions differ significantly in both their performance (i.e. the requirements on the quality of input randomness) and the computational complexity. As a large amount of data is required to produce a Mbit-range output bit string, we require our extractor to be computable in quasi-linear memory and time. For this reason, we choose the two-source extractor from \cite{Vazirani1987},
which is sub-optimal in terms of requiring a large amount of entropy in the two input sources, but can be implemented with computational cost of $O(n\log n)$ \cite{Foreman2023}.

As a final result of the device-independent randomness amplification protocol, using the chosen extractor, we generate a private $\varepsilon$-random random bit-string $\mathbf{K}$ consisting of 20,431,465 bits. The security parameter $\varepsilon$ defining the trace distance between the generated randomness and perfect randomness was chosen to be $\varepsilon = 10^{-12}$. That is, except with a negligible failure probability of $10^{-12}$, we have extracted a uniformly random bitstring from the SV-source.
The generated string $\mathbf{K}$ is private and is guaranteed to be $\varepsilon$-random provided the bias $\mu$ of the input source satisfies $\mu \leq 0.75\%$.

While it is not possible to infer anything about the process used to generate the randomness from the output bit string alone, we still find it meaningful to run statistical tests on the final string produced by the protocol and ensure that the result is consistent with conventional state-of-the-art random number generators.
To verify the quality in the statistical manner, we run the National Institute of Standards and Technology (NIST) statistical test suite \cite{Rukhin2001} and Diehard~\cite{Marsaglia2002} batteries of statistical tests 
on the string $\mathbf{K}$ generated as a result of the full execution of device-independent randomness amplification protocol. We find that the result passes all statistical tests in the NIST suite and all tests from the Diehard battery for which the string is sufficiently long.

In summary, we have demonstrated experimentally that it is possible to amplify the quality of physical randomness. More precisely, starting from $5,368,709,120$ low-quality random bits with a bias of up to $0.75 \%$, we produced $20,431,465$ bits of almost perfect randomness, guaranteed to be equal to a uniformly distributed bit string with probability at least $1-\varepsilon = 1- 10^{-12}$. This guarantee holds in a device-independent manner, meaning that no assumption has to be made about the quantum devices, except for the shielding of their outputs from adversaries. For the precise assumptions on the classical components, e.g., the ones used for post-processing, see  Supplementary Information section~V.

The device-independent amplification of randomness is highly demanding: it requires an experimental setup where a large quantity of high-quality entangled qubits are distributed and measured at space-like separation. In particular, the requirements on the measured correlations are substantially higher than for loophole-free Bell tests, as shown in Fig.~\ref{fig:S_vs_n_theory}. This was achieved with fast measurement, data acquisition and processing time ($<22~\mu$s/trial in total), multi-day room-temperature control electronics and calibration stability, enabling us to collect Gbit-range amount of data, and below $14\%$ total loss connection between the quantum bits at nodes, enabling us to distribute high-fidelity entangled states and reach S-values of $S=2.27$. In the Methods section we outline key improvements of the setup compared to the prior works.

The amplified randomness may be used directly as a public source of randomness, comparable to the NIST randomness beacon, see \cite{Kelsey2019}. The advantage of the device-independent randomness amplification scheme is that the random bits come with a certificate (the data showing the Bell violation), which proves that they have full entropy, even conditioned on the low-quality input bits. Note that the bit rate of our scheme is comparable to that of the current NIST randomness beacon.

High-quality randomness is necessary to guarantee the security of cryptographic schemes. For example, the claim that quantum key distribution is secure relies on the assumption that the random bits that are used in the protocol, e.g., for choosing the measurement bases, have full entropy. Similarly, cryptographic tasks such as bit commitment, encryption, secret sharing, zero-knowledge proofs, and secure multi-party computation cannot be implemented securely with sources of less-than-perfect randomness \cite{DodisShien2004}. 
This is particularly crucial for schemes that aim at the ultimate physical limits of security, such as device-independent quantum key distribution~\cite{Ekert2014}. Assuming the availability of perfect randomness would be incompatible with this goal, hence the use of device-independent randomness amplification seems unavoidable. Due to the high repetition rate of our experiment, the amount of randomness produced in this work exceeds the randomness consumed by the recently published DI-QKD implementations \cite{Nadlinger2022,Zhang2022l}.

In the future, when multi-node networks with high-fidelity communication links are available, device-independent randomness amplification may be combined with distributed computing protocols \cite{Yao1982,Goldreich1987}. This would reduce the assumptions on the classical processing. The quality of the output randomness could then be guaranteed even if a certain number of network nodes are corrupted. 

\section*{Acknowledgments}
We thank Raul Conchello Vendrell for contributions to the software framework, Reto Schlatter for support with the operation of the cryogenic setup, Anna Efimova for assistance with the synchronization of the electronic instruments, and Nicolas Gisin and Matteo Fadel for comments on the manuscript.

\paragraph*{Funding:}
The work was funded by the European Union’s Horizon
2020 FET-Open project SuperQuLAN (899354), the Air Force Office of Scientific Research (AFOSR), grant No. FA9550-19-1-0202, the
QuantERA project eDICT, the National Centre of Competence in Research SwissMAP, the Quantum Center at ETH Zurich and by ETH Zurich. RW acknowledges support from the Ministry of Culture and Science of
North Rhine-Westphalia via the NRW-Rückkehrprogramm.

\section*{Author contributions}
A.K., S.S. and J.S. assembled the experimental setup, performed the measurements and analysed the data. A.K., S.S., F.B., C.H. and J.S. developed the control software. R.W. and M.S. performed the theoretical analysis, randomness extraction and developed the security proofs.
A.K., S.S., R.W., M.S., J.S. and R.R. wrote the manuscript with input from all authors. R.R. and A.W. supervised the project.

\section*{Competing Interests}
The authors declare no competing interests.

\section*{Data Availability}
All data is available from the corresponding authors upon reasonable request.

\section*{METHODS} 

{\it Definition of randomness --} The definition of randomness used in this work follows \cite{Colbeck2012} and \cite{Frauchiger2013}. Here, the notion of a \emph{space-time variable} is introduced. This is a random variable with an associated coordinate that indicates the physical location of the value in relativistic space-time. The output $X$ of a random process as well as any side information is modeled by such space-time variables. Let $X$ denote the space-time variable whose coordinate indicates the point in space-time where the process of generating $X$ is started, also called the \emph{trigger event}. $X$ is called $\varepsilon$-random if it is $\varepsilon$-close to uniform and uncorrelated to all other space-time variables that are outside of the future light cone of $X$. Denoting this set by $\Gamma_X$, this can be expressed as
	\begin{equation}
		\label{eq:randomprob}
		\frac{1}{2}\|P_{X\Gamma_X}-U_{X}\times P_{\Gamma_X}\|_1\le\varepsilon,
	\end{equation}
where $U_X$ is the uniform distribution, i.e., $U_{X}(x)=\frac{1}{|\mathcal{X}|}$ for all $x\in\mathcal{X}$, and 
	\begin{equation}
		\frac{1}{2}\|P_X-Q_X\|_1=\frac{1}{2}\sum_x|P_X(x)-Q_X(x)|
	\end{equation}
is the trace distance between two probability distributions.

The independence condition in (\ref{eq:randomprob}) can be lifted to a condition on quantum states:
	\begin{equation}
		\label{eq:randomquantum}
		\frac{1}{2}\|\rho_{XE}-\omega_X\otimes\rho_E\|_1\le\varepsilon,
	\end{equation}
where $\|\cdot\|_1=\mathrm{tr}(|\cdot|)$ denotes the trace norm and $\omega_X=\frac{1}{|\mathcal{X}|}\mathrm{id}_X$ denotes the maximally mixed state on $X$. This condition describes the states for which $X$ is $\varepsilon$-close to uniformly random and independent of the quantum side information $E$. Importantly, the trace norm can only decrease when we apply a completely positive trace preserving map (e.g., a measurement) on the system $E$. Hence, the independence condition in (\ref{eq:randomprob}) follows from the quantum version (\ref{eq:randomquantum}). 

The condition in (\ref{eq:randomquantum}) ensures that the random variable $X$ is uncorrelated with anything at an earlier time. Thus, in line with what was mentioned in the introduction, the value of $X$ is unpredictable, guaranteed by the laws of physics.

From the definition of the employed two-source randomness extractor \cite{Dodis2004}, we see that proving that condition (\ref{eq:randomquantum}) holds reduces to finding a lower bound on the min-entropy. This quantity, in turn, is related to the MDL inequality violation in \cite{Kessler2020}. As such, an observation of the MDL inequality violation directly allows to prove the independence condition (\ref{eq:randomquantum}), and hence certifies $\varepsilon$-perfect randomness.

\

{\it Loopholes and assumptions in experimental implementations --} As device-independent randomness amplification aims to certify the output of a device solely based on fundamental physics principles, it is critical to ensure that an experimental implementation makes no additional assumptions about the setup. If such assumptions are present, they open so-called \textit{loopholes} in the conclusions drawn from the experiments \cite{Larsson2014}, similarly to the situation with Bell tests.
Specifically, the \textit{locality loophole} \cite{Brunner2014} opens when the two nodes A and B are able to exchange classical information throughout each trial of the experiment. This would allow them to adapt the measurement result on one node based on the measurement basis choice on the other node, and thereby to achieve strong correlations in the input-output-bit statistics that are however not necessarily non-local. This issue can be addressed by setting up the experiment in a space-time configuration, where the two nodes are space-like separated throughout each trial of the protocol. This ensures that in each trial, no signal traveling at most at the speed of light, possibly carrying information about the measurement basis of one node, can reach the other node before the latter has completed its measurement. Therefore, carefully determining the space-time configuration of the start and stop events of each trial ensures that the locality loophole is closed.
Furthermore, the \textit{fair-sampling loophole} \cite{Garg1987,Eberhard1993} opens if assumptions are made about the distribution of the underlying input-output-bit correlations, for instance because one does not have access to the measurement results of all trials. This loophole is closed by including the data captured in all trials of the experiment.
Third, the \textit{memory loophole} \cite{Larsson2014} is addressed by not assuming that the results of each trial are independent and identically distributed.

A Bell test closing all these loopholes simultaneously is typically referred to as \textit{loophole-free}, and protocols based on Bell tests that achieve the same feat are termed \textit{device-independent}, as they do not rely on assumptions about the employed devices.
\

{\it Entropy production; measurement dependent locality (MDL) inequality --}
The security of our device-independent randomness amplification protocol hinges on the property that the measurement results of Bell tests exhibit a certain degree of inherent randomness \cite{Ekert1991, Acin2007, Pironio2010}. For most Bell inequalities, including the widely known CHSH inequality, this property only holds under the assumption that the inputs are chosen uniformly at random. This assumption is clearly not met in a device-independent randomness amplification experiment where the goal is precisely to process imperfect randomness into better randomness.
Therefore, what we require for device-independent randomness amplification is a special type of Bell inequality which can be used to certify the randomness of the devices' outputs even when the inputs are chosen using a SV source. This is exactly what is achieved by the following Bell inequality (also called a measurement dependent locality or MDL inequality) \cite{Putz2014a}.
\begin{equation}
\begin{aligned}
    \label{eq:mdl_inequality}
    0 \geq S_\mu \equiv &\mu_{\mathrm{min}}P_{ABXY}(0000) \\
    &- \mu_{\mathrm{max}}(P_{ABXY}(0101) + P_{ABXY}(1010) + P_{ABXY}(0011)),
\end{aligned}
\end{equation}

where $\mu_{\mathrm{min}}$ ($\mu_{\mathrm{max}}$) are lower (upper) bounds on the probability of the source to produce any given output $xy$. In our experiment, we assume that both sources are $\mu-SV$ sources and therefore we have $\mu_{\mathrm{min}}=(0.5-\mu)^2$, $\mu_{\mathrm{max}}=(0.5+\mu)^2$. In the special case of a perfect source (i.e. $\mu=0$), the above inequality corresponds to the well-known CHSH inequality (in Eberhard form \cite{Eberhard1993}). Note that as $\mu$ increases, the inequality \eqref{eq:mdl_inequality} becomes increasingly difficult to violate. Therefore, for any quality of the experiment there exists a maximal threshold bias $\mu$ of the sources that can be tolerated (compare Fig.~\ref{fig:S_vs_n_theory}).

More quantitatively, we are interested in the amount of randomness which can be extracted from the outputs $\mathbf{AB}$ of our device. It is well established that the number of extractable random bits is characterized by the conditional (smooth) min-entropy $H_{\mathrm{min}}^{\varepsilon}(\mathbf{AB}|\mathbf{XY}E)$ \cite{Renner2005, Tomamichel2013, Arnon-Friedman2016}, where $E$ denotes any side-information the adversary may hold about $\mathbf{AB}$. In our protocol we include the outputs of the source $\mathbf{XY}$ as part of the conditioning system. This means that our protocol remains secure even if the adversary learns the outputs of the SV source. This property is called privatization. The main task now is to find a bound on the smooth min-entropy. This is a challenging problem in general since the system $\mathbf{AB}$ is very large. To overcome this difficulty, we need a method to bound $H_{\mathrm{min}}^{\varepsilon}(\mathbf{AB}|\mathbf{XY}E)$ in terms of a simpler quantity which we can evaluate. This is exactly what is achieved by the entropy accumulation theorem \cite{Dupuis2020}. In rough terms, the entropy accumulation theorem states that \cite{Kessler2020}
\begin{equation}
    \label{eq:eat_bound}
    H_{\mathrm{min}}^{\varepsilon}(\mathbf{AB}|\mathbf{XY}E) \geq H(S_{\mu}) n - c\sqrt{n},
\end{equation}
where $H(S_{\mu})$ is the single-round von Neumann entropy as a function of the MDL violation $S_{\mu}$ and $c$ is a constant. Using the two-source randomness extractor, the condition in \eqref{eq:eat_bound} then guarantees that the final bit string $\mathbf{K}$ satisfies the given security definition \cite{Arnon-Friedman2016}.

{\it Requirements for the randomness extractor --}
For a randomness extractor to work in the context of device-independent randomness amplification, the two sources $\mathbf{Z}$ and $\mathbf{AB}$ of the extractor need to satisfy two conditions. The first condition is that both sources need to have sufficiently high min-entropy.
Since the bit string $\mathbf{Z}$ is produced by a SV source, we can directly bound its min-entropy by $H_{\mathrm{min}}(\mathbf{Z}) \geq -d \log\left(\frac{1}{2} + \mu\right)$. The min-entropy of $\mathbf{AB}$ on the other hand is lower-bounded by Eq.~\ref{eq:eat_bound}.
The second condition is that, from the adversary's perspective, the two inputs to the extractor (i.e., $\mathbf{AB}$ and $\mathbf{Z}$) need to be independent of each other. This condition essentially requires that the outputs $\mathbf{AB}$ remain separated from the QRNG so that there is no causal influence from $\mathbf{AB}$ to $\mathbf{Z}$. For a more formal treatment of this condition, see Supplementary Information section~V.
In summary, we produce two bit strings $\mathbf{AB}$ and $\mathbf{Z}$ which both have high min-entropy and are independent of each other from the adversary's point of view. These two properties together then allow for the extraction of a perfectly random bit string. A detailed description of the security proof can be found in Supplementary Information section~VI.

\ 

{\it Setup improvements --} 
In this experiment we aim to generate a Mbit-range output bit string, which requires to perform $\approx10^9$ Bell test trials. This number exceeds by 3 orders of magnitude the statistics presented in Ref.~\cite{Storz2023} demonstrating a loophole-free Bell inequality violation with superconducting circuits ($10^6$ trials).
This necessitates to maintain the calibrations and setup performance in a stable manner over the course of days, while continuously taking data. We have approached this challenge from both directions by shortening the required data acquisition time and engineering the control electronics to be stable over longer time scales. The former was achieved by increasing the repetition rate of the experiment from 12.5~kHz to 50~kHz, and by greatly decreasing the digital processing and network communication overhead time from $> 200 \mu$s/trial to $< 2 \mu$s/trial. These improvements have shortened the required data acquisition time to slightly less than nine hours. We have also majorly enhanced the stability of the room-temperature microwave control electronics and the steadiness of the pulse timing and triggering, which resulted in the performance stability illustrated by Fig.~\ref{fig:S_vs_time}. Technical details of the modifications and our timing and triggering schemes can be found in Supplementary Information section~II.

Alongside, in this experiment we achieve the CHSH S-value of $S=2.27$ (compare to $S=2.07$ in Ref.~\cite{Storz2023}). In addition to an increase stemming from the enhanced stability of control electronics and triggering, we have targeted the main factor limiting the fidelity of the Bell state, i.e. microwave photon losses in the quantum channel connecting the two sites. Specifically, we removed a cryogenic circulator, which was previously used to aid with thermalization of the intra-waveguide stray radiation and allowed measuring the temporal mode function of microwave photons emitted from either of the nodes. In addition to enabling bi-directional photon exchange across the channel, removing the circulator reduced the channel loss by about 3$\%$. As the 30-m superconducting aluminum waveguide connecting the nodes has negligible ($<1$ dB/km) losses, the majority of losses are due to the on-chip and PCB wiring, flexible cables connecting the sample mount to the waveguide, and adapter parts enabling coax-to-waveguide and coax-to-PCB coupling. We have therefore improved the PCB-to-waveguide wiring by decreasing the amount of adapter parts, and by replacing flexible coaxial cables by superconducting ones \cite{Storz2023a}. These changes have further cut the channel loss by about 3$\%$. In total, we estimate to have decreased the channel loss from about 19$\%$ to about 12-14$\%$.

\ 


\section {Supplemental information}

\section{I. Experimental devices and a two-node untrusted $\lambda$}

In this section, we describe the experimental equipment used in the device-independent randomness amplification implementation and specifically focus on what forms part of the untrusted device $\lambda$ introduced in the main text, and which devices we consider to be trusted. Brief schematics of individual components of the experimental setup and the division into an untrusted and trusted parts are illustrated in Figure~\ref{fig:device_lambda}.

First and foremost, $\lambda$ contains the cryogenic system which operates two remotely connected superconducting circuit quantum devices. As discussed in detail in Ref.~\cite{Schaer2023}, the system is composed of two dilution refrigerators at nodes A and B, housing the quantum devices, of a central cooling unit, and of a 30-meter-long cryogenic quantum link. The quantum link contains a rectangular Aluminium waveguide cooled to temperatures below 50~mK along the full distance, serving as a quantum channel between the two nodes.

The quantum devices operated at the nodes each contain a transmon-style qubit, coupled to control lines for performing gates along X, Y and Z axes of the corresponding Bloch sphere ($\sigma_x, \sigma_y$ and $\sigma_z$ operators), and a resonator-Purcell filter circuit used for reading out the state of the qubit using a microwave tone \cite{Walter2017}. The qubits are furthermore coupled to another resonator-Purcell filter circuit which connects them to the 30-meter-long quantum channel. This connection allows for the exchange of single microwave photons between the two nodes, and thereby for the generation of entanglement \cite{Kurpiers2017}. The qubits are operated at a frequency of about 7.8~GHz, and they show coherence times of about $T_1\sim$20~$\mu$s and $T_2^\star\sim$10~$\mu$s. We refer the interested reader to Ref.~\cite{Storz2023}, which uses the same quantum devices, for further information.

In addition, we consider all of the electronic setup for controlling and reading out the state of the qubits as part of the untrusted device $\lambda$ (with explicit exceptions mentioned below). This includes the arbitrary waveform generators and microwave generators used to prepare qubit control pulses and the readout signals, microwave equipment such as mixers and amplifiers, and the cables connecting the room-temperature electronics to the quantum devices in the cryostats. Hence the entirety of the quantum and cryogenic hardware, as well as the majority of the control electronics form part of the untrusted two-node black-box device $\lambda$.

The trusted part of the equipment, located outside of $\lambda$, is formed by the devices necessary to verify the device-independent protocol interface (see Box I in the main text) and ensure the closure of the locality loophole. First, the random number generators (RNGs) used to prepare the input bits are considered trusted, although public and imperfect, which reflects the assumption of them being Santha-Vazirani (SV) sources \cite{Santha1986}. Second, regarding the measurement process, we consider trusted the analog-to-digital converter (ADC) used to digitize the measurement signal, and the field-programmable gate array (FPGA) used to perform state discrimination and post-processing. Also outside of $\lambda$ is the equipment used to verify the space-time configuration of the experiment, relevant for closing the locality loophole. This includes the oscilloscopes used for the timing verification and to record reference pulses (see Fig.~\ref{fig:device_lambda}); further details about the real-time synchronization monitoring and the verification devices are given in Supplementary Information sections~II and~III. Finally, the computer storing the measurement results in a classical memory and performing the randomness extraction are part of the trusted equipment.

\begin{figure*}
  \begin{center}
  \includegraphics[width=1.8\columnwidth]{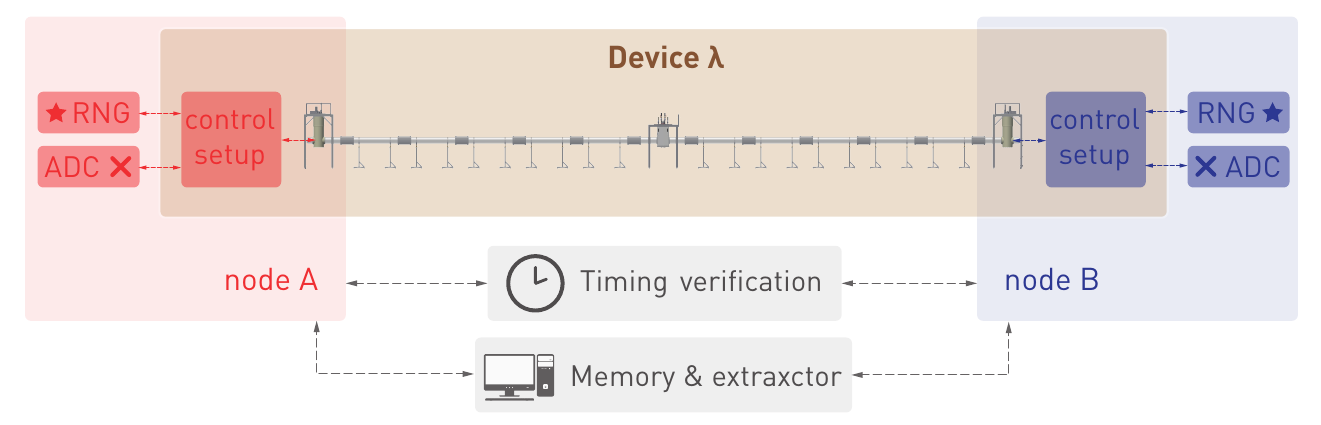}
  \caption{Division of the experimental setup into the untrusted (Device $\lambda$) and trusted (everything outside of $\lambda$) parts. Red and blue squares highlight the locations of nodes A and B of the two-node setup, which are separated by 30 metres. The untrusted part contains the entirety of the cryogenic enclosure and quantum devices, the room-temperature (RT) control setup and wiring. The only two trusted components of the RT electronics are a randomness source and an ADC, compare with Fig.~\ref{fig:protocol} in the main text. Timing verification equipment and classical memory also form part of the trusted equipment. Further details about timing verification are provided in Supplemental Information section~II.}
  \label{fig:device_lambda}
  \end{center}
\end{figure*}

\section{II. Triggering scheme, setup \& stability modifications}


The requirement to maintain sub-nanosecond synchronization precision over timescales of several days poses a significant practical challenge. Furthermore, the hardware specifications and datasheets of the microwave devices used in the setup only guarantee jitter on the order of the individual devices' granularity, which for a Tektronix AWG5014 comes to about 1.67~ns. Hence it has taken several iterations to achieve the reliable and stable synchronisation of the electronic devices used for the experiment, which shows effectively no jitter on week-long time windows. In the following, we discuss key aspects of our synchronisation scheme, including the distribution of a stable reference clock and trigger signals, the avoidance of jitter of electronic signals, and an automated timing verification scheme.


\paragraph{Reference clock and trigger --} We generate the main reference clock signal at the center of the cryogenic link using a two-stage process. First, a Rubidium clock with an output frequency of 10~MHz is used as a reference input to a dedicated microwave generator (MWG). The generator then outputs a stable 1~GHz signal, which we distribute to both end nodes A and B through dedicated coaxial cables of identical length, as illustrated in Figure \ref{fig:clock_trigger_scheme}. The symmetrical cabling scheme ensures an equal time delay and signal power at both nodes. Furthermore, the variations of electrical length of the cables due to global temperature variations in the laboratory are also identical. At the nodes, we daisy\nobreakdash-chain the microwave generators involved in the experiment using the 1~GHz reference input/output port of the used microwave sources, where the first MWG in the chain is synchronized to the reference clock signal in the middle. 

In the applications involving synchronizing a heterogeneous network of multiple digital and analog signal sources, one typically faces the problem of signal jitter, a property of digital electronics to have bi\nobreakdash-stable input trigger regions. We mitigate this problem by clocking the local electronic devices using frequency, phase and power-optimized reference signals generated by dedicated MWGs, which are locked to the main reference clock. This also helps minimizing local phase drifts, as further discussed below.

We generate the master trigger signal using a Tektronix AWG5014 located at the center of the link and distribute it to the two nodes, similarly to the reference clock. At each node, an AWG5014 receives the main trigger signal for triggering the relevant local electronic devices.

\paragraph{Timing verification --}
To monitor the relevant timings in the experiment and verify the closure of the locality loophole during the experiment execution, we operate oscilloscopes at each node. We use them to record the arrival times of reference pulses generated by AWGs and the RNG, output at a pre-defined moment in time, and sent through cables of a known constant length, see Figure~\ref{fig:clock_trigger_scheme}a. The inset in Figure~\ref{fig:clock_trigger_scheme}a displays the typical pulse sequence of a trial of the experiment, including microwave pulses sent to the qubit in the cryostat and reference pulses for the oscilloscope. The local oscilloscope records the reference pulses from the individual devices repeatedly over the course of several seconds, and stores the envelope of the captured traces, i.e. the maximum and minimum voltage measured at each point in time throughout the current acquisition period (Figure~\ref{fig:clock_trigger_scheme}c). In the timing measurements presented in Supplementary Information sections~III and~IV, following a conservative approach, we use the envelopes which maximize the total protocol duration, therefore calculating an upper bound on the actual duration of each individual trial.

\paragraph{Phase drift measurements --}
Using the setup described above, we determined the phase drifts of the individual devices under the optimized clocking and triggering scheme, characterizing the duration of the absence of jitter in the system. The absolute drifts summarized in Table \ref{tab:drift_and_granularity} are measured relative to the AWG5014 at the nodes A \& B, we took the larger of the two measured values. The AWGs and FPGAs have small drift compared to their granularity. The largest drift belongs to the QRNGs, limiting the system to be in a jitter\nobreakdash-free configuration for about 2.5 days. To make full use of a stable configuration for the main experiment presented in this publication, we adjust the phases of the reference signals clocking the AWGs in a way so that the trigger arrival time on all devices is in the middle of their two granularity points before starting acquiring the dataset.

\begin{table}[h]
    \centering
    \begin{tabular}{|l|l|l|} 
        \hline
        \multirow{2}{*}[4pt]{Device} & \multirow{2}{*}[4pt]{Granularity} & \multirow{2}{*}[4pt]{Relative drift} \\
        & in ns & in ps/hr \\
        \hline
        AWG5014 (other node)& 1.66 & 0.1  \\
        FPGA & 8 & 0.1  \\
        AWG70K & 6.4 & 0.2  \\
        QRNG  & 2 & 16.9  \\
        \hline
    \end{tabular}
    \caption{Granularity and drift properties of the electronic devices used to control or readout the qubit. The drift measurements are taken about a course of 50 hours and are a linear fit to the obtained data from the oscilloscope. 
    }
    \label{tab:drift_and_granularity}
\end{table}

\begin{figure*}
  \begin{center} 
   \includegraphics[width=2\columnwidth]{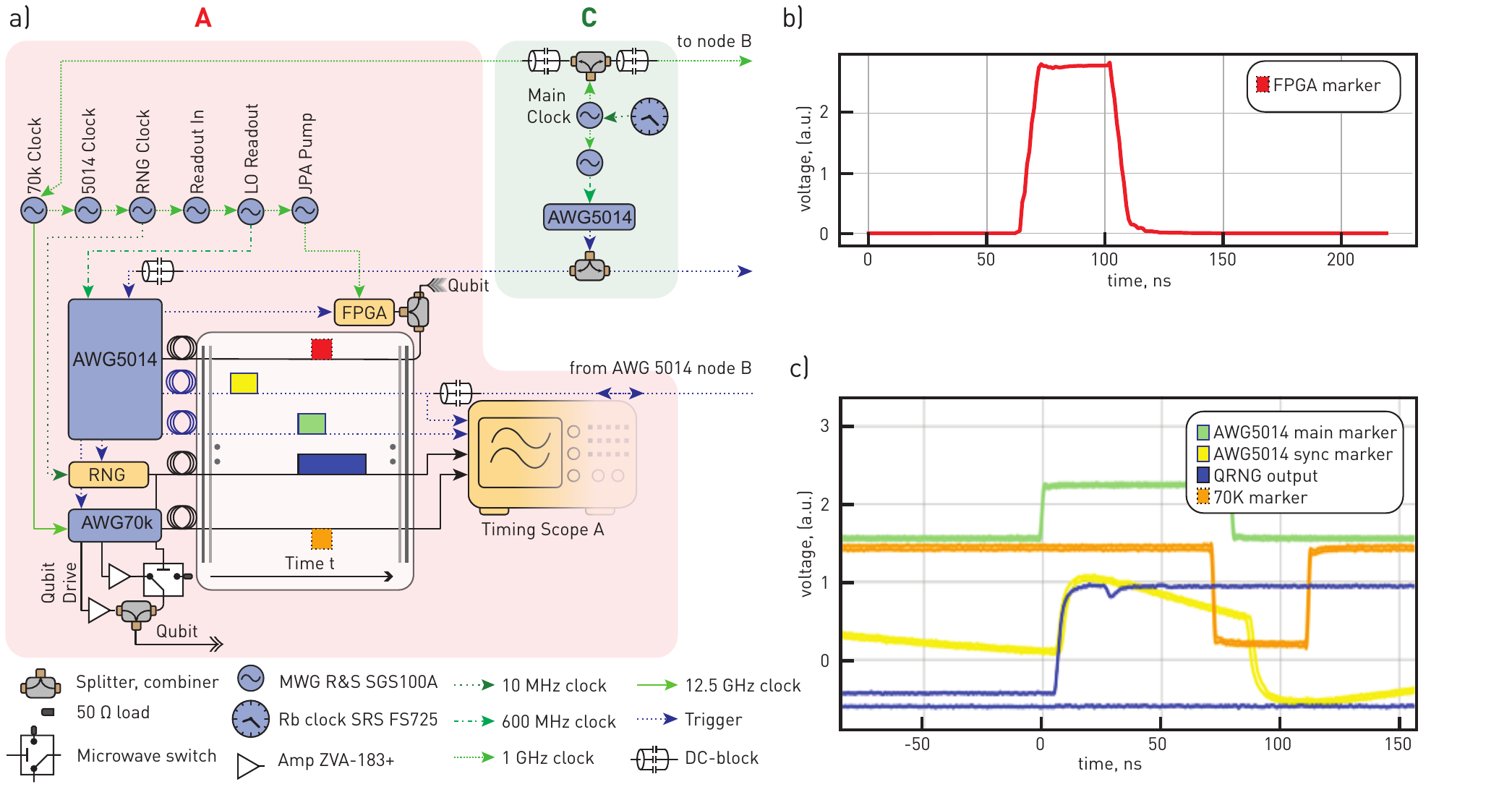}
  \caption{a) Clock and trigger distribution scheme for the synchronization of the room-temperature electronics at node A and center C, the scheme is identical at node B. Inset: pulses and connections to oscilloscope and FPGA to monitor and measure the output delays of various devices. Yellow: auxiliary pulse from the master device (AWG 5014) at another node; Green: timing reference pulse. Blue: QRNG output signal; Orange: a negative image of a fixed-time synchronization pulse from the AWG 70k. b,c) An example of the measured waveforms on an FPGA and an oscilloscope. The signal on the FPGA corresponds to the pulse output from AWG5014 at a fixed time and is used to synchronize the FPGA to the global lab timeframe. The oscilloscope pulses correspond to the ones shown in the inset a) with same color code. The green reference pulse defines the zero time; the yellow pulse travelling from another node is used to ascerain the nodes' synchronization. The blue and orange pulses provide the information about the RNG and AWG70k reference frame synchronization. Waveforms with dotted lines in a) are only generated and measured between the acquisition blocks, while the others are always available}
  \label{fig:clock_trigger_scheme}
  \end{center}
\end{figure*}

\section{III. Timing Stability}
\label{app:time_stab}

As the main experimental sequence is sufficiently extended in time (spanning more than 8 hours), in addition to full timing calibration before and timing verification after the experiment, we employ interleaved timing checks ensuring that the locality loophole is closed during the entire duration of the experiment. To achieve that, we install two oscilloscopes, one at each node of the setup, and connect them to the auxiliary outputs of the three room-temperature control devices at each node: the master device of the node (Tektronix AWG 5014), the device we use to generate the qubit control pulses (Tektronix AWG 70k), and the QRNG. Finally, we connect the fourth port of the oscilloscope via a splitter to a cable connecting the two nodes A and B in order to monitor the synchronisation between the nodes.

\begin{figure}
  \begin{center} \includegraphics[width=1\columnwidth]{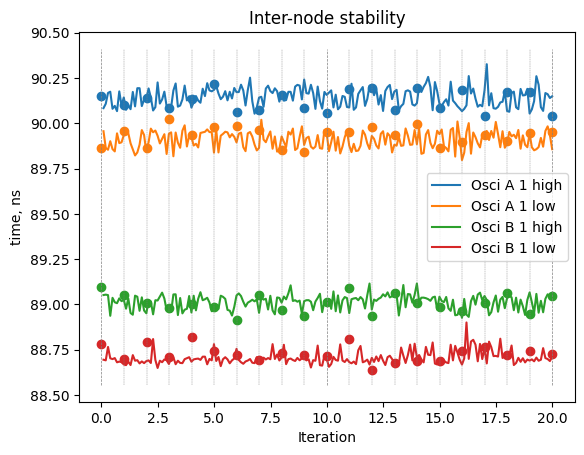}
  \caption{\textbf{Inter-node timing stability and synchronisation} Data for measurements of an auxiliary pulse arriving from another node to an oscilloscope installed at node A (blue and orange) and B (green and red). Each datapoint corresponds to several thousand triggers recorded in the envelope mode on an oscilloscope; Plotted are both high (late) and low (early) points of the envelope trace crossing the pre-defined threshold. Presence of jitter would manifest as the high and low points corresponding to the same device being different by the device granularity, 1.66~ns. The existing difference between the high and low envelope traces reflects the limited sampling rate of the oscilloscope (2.5 GS/s) and possible jitter on the order of 100~ps, which is within our uncertainly calculations. Full circles correspond to datapoints taken between the data blocks, and continuous lines correspond to the on-the-fly timing verification measurements.}
  \label{fig:intra_node}
  \end{center}
\end{figure}

Before the first and after every block of data during the data acquisition (recall, each block contains $4^{13}$ ($\sim 67$~million) trials) we perform the following timing verification routine: we program a reference pulse sequence (inset in Fig.~\ref{fig:clock_trigger_scheme}a) to every device in the system, and acquire several thousand triggers over fifteen seconds on both oscilloscopes. We process the data by taking the envelope of received signals (a result of such data acquisition is shown in Fig.~\ref{fig:clock_trigger_scheme}c), and for a certain threshold 
we plot the low and high envelope point. As we trigger a local oscilloscope on the signal arriving from the master AWG of the corresponding node, observing a signal arriving from another device allows monitoring relative time stability of the two devices. In addition, presence (or absence) of a time difference between the low and high envelopes of the same signal signifies presence (or absence) of trigger jitter -- another effect which could prevent us from closing the locality loophole. Note, that time differences sufficiently below 1~ns are due to the pulse width and limited timing resolution of the measurement device and are not a sign of jitter: trigger or device jitter would lead to the time difference of at least the smallest granularity of the control devices, which is 1.66~ns in our case (Tektronix AWG 5014).

\begin{figure*}
  \begin{center} \includegraphics[width=1\columnwidth]{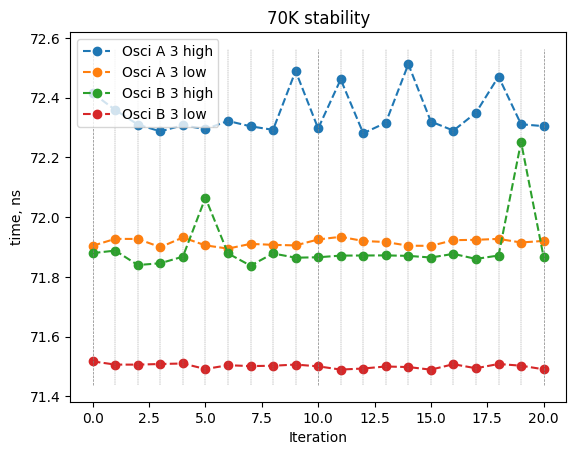}\includegraphics[width=1\columnwidth]{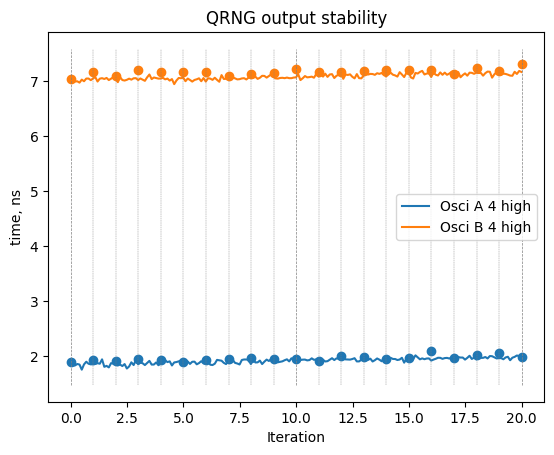}
  \caption{\textbf{Timing at a single node} (left) Data showing the arrival time of a timing calibration pulse from the AWG70k used to control qubit rotations at node A (blue and orange) and B (green and red). As in Fig.~\ref{fig:intra_node}, each datapoint corresponds to several thousand triggers recorded in the envelope mode on an oscilloscope; Plotted are both high (late) and low (early) points of the envelope trace crossing the pre-defined threshold. These timings are only measured between the data acquisition blocks. (right) Data showing the arrival time of a QRNG switching pulse and hence defining the starting time of each trial. Compare to Table~\ref{tab:timing}. Full circles correspond to datapoints taken between the data blocks, and continuous lines correspond to the on-the-fly timing verification measurements.}
  \label{fig:Node_stability}
  \end{center}
\end{figure*}

Figure~\ref{fig:intra_node} shows the inter-node timing and its stability. The key information one obtains from the figure is the deviations between the maximum and minimum for each curve (see figure caption for details) are bounded by at most 200ps and the curves are stable in time. In a similar fashion, Fig.~\ref{fig:Node_stability}a demonstrates the relative time stability of the two AWGs on each node, and figure~\ref{fig:Node_stability}b shows the QRNG output timing stability. Note that as QRNG does not output a pulse, but rather switches the state and stays in the new state, we could not measure the low envelope of the switching channel (and hence observe possible jitter in the positive time direction). An example of such measurement is also shown in Fig.~\ref{fig:clock_trigger_scheme}c. In other words, if during one of the trials the QRNG switches later than it should, it would not appear on our measurement. That does not lead to opening the locality loophole however, as it only makes the length of the algorithm during that trial shorter (and decreases the state preparation quality, of course).

Evidently, all three figures show neither jitter nor instability in the timings during the experimental execution. From Fig.~\ref{fig:intra_node} one can remark that our two nodes are not perfectly in sync, but rather their time frames differ by 0.55~ns. This level of clocking mismatch is reasonable as our sync precision is limited by approximately half the granularity of the master AWGs at each node. We take this mismatch into account in the overall time budget of the experiment by conservatively adding it to both $A\rightarrow B$ and $B\rightarrow A$ protocol durations. Hence from the steadiness of the timing and clock matching precision we can conclude that the locality loophole was closed during the whole data acquisition run.

\section{IV. Timing Verification}
\label{app:time_veri}

In this section, we present our method of characterising the space-time configuration of the experiment, in order to ensure that the locality loophole is closed.
As mentioned in the main text, we define the start time $t_\star$ of each trial of the randomness amplification protocol as the time at which the local random number generators start generating a random input bit (see Fig.~\ref{fig:protocol_experiment}). The stop time $t_\times$ marks the time when the untrusted qubit readout signal arrives at the input of the analog-to-digital converter (ADC), part of the trusted laboratory.

The locality loophole is closed when both of the following conditions hold: the basis choice at node A must not be able to influence the measurement at node B using influences propagating at most at the speed of light $c$, and vice versa. In order to ensure that, we must precisely know where and when the start and stop events happen.

In a first step, we measure the spatial distance between the locations corresponding to the start event at one node and the stop event at the other node, in a scheme discussed in Ref.~\cite{Storz2023}. We find the shorter of these two distances to be the one from the start event at node B to the stop event at node A, measuring $d=32.928\,\rm{m}$.
This defines the Bell distance, and thereby the available duration during which the exchange of classical information exchange between the two nodes is forbidden, $t_d=d/c=109.83$~ns. 

In a second step, we verify that the actual duration of the protocol $t_\mathrm{protocol}=t_\times-t_\star$ is shorter than the budget $t_d$ imposed by the Bell distance.
We perform the analysis for both critical durations, i.e. for $t_{\mathrm{protocol}, A\rightarrow B}$ referring to the duration between the start event at A and the stop event at B, and vice versa. In the following, for simplicity, we focus on $t_{\mathrm{protocol}, A\rightarrow B}$, and the calculation of the $t_{\mathrm{protocol}, B\rightarrow A}$ follows accordingly.

At the beginning of the experiment, it is critical to establish a common reference frame for the devices involved in the experiment. For this purpose, we install an oscilloscope at each node (part of the trusted setup), to detect the arrival time of specific reference pulses output by electronic devices in the setup (part of $\lambda$). We define the reference time $t_{\mathrm{0,Osci A}}$ as the moment when oscilloscope A detects a reference pulse from the local master arbitrary waveform generator (AWG), sent through a cable with a well-characterized electrical length. We refer the interested reader to Ref.~\cite{Storz2023} for more details.

In a next step, we determine the protocol duration $t_{\mathrm{protocol}, A\rightarrow B}$.
As we do not have direct experimental access to the start and stop events defined above, we extract $t_{\mathrm{protocol}, A\rightarrow B}$ by performing a set of measurements of the duration of the individual segments of the protocol. For this purpose, it is instructive to express the protocol duration in terms of its individual constituents,

\begin{align}
    t_{\mathrm{protocol}, A\rightarrow B} &= t_{\times B} - t_{\star, A} \\
     &= t_{\mathrm{detection, B}} + t_{\Delta B}  \\
     &+ t_{\Delta AB} + t_{\mathrm{latch, A}} + t_{\mathrm{RNG, A}}.
    \label{eq:6_timing}
\end{align}

Figure~\ref{fig:timing} provides a graphical illustration of the individual segments of the protocol.
Here $t_{\mathrm{detection, B}}$ marks the length of the data acquisition, or the difference of the time at which the ADC at node B starts acquiring data and the time when it stops ($t_{\times B}$). We can experimentally control this duration by programming the field-programmable gate array (FPGA), which is connected to the ADC, to have zero acquisition weights after the specified time. Typically, however, the FPGA is not perfectly synchronized to the node reference frame $t_{\mathrm{0,Osci B}}$. To account for differences between the internal reference frame of the local FPGA, $t_{\mathrm{0,FPGA B}}$, and the node reference frame $t_{\mathrm{0,Osci B}}$, we include their offset $t_{\Delta B}=t_{\mathrm{0,FPGA B}}-t_{\mathrm{0,Osci B}}$ in the expression above. This offset is measured in a separate calibration experiment, where we compare the arrival time of a microwave square pulse recorded by the local oscilloscope and by the local FPGA.

Next, $t_{\mathrm{RNG, A}}$ denotes the time it takes for the local RNG to generate a random bit. The random number generators used in this experiment \cite{Abellan2015} produce random bits based on the principle of phase diffusion of a laser, founded on the underlying quantum mechanical process of spontaneous emission. More specifically, $t_{\mathrm{RNG, A}}$ is the duration between the laser injection current crossing the lasing threshold and the moment the RNG outputs the random bit in the form of a differential voltage, as described in Ref.~\cite{Abellan2015}. We have more direct experimental access to the latching event, and define the time between the RNG latching a random bit and the node reference time $t_{\mathrm{0,Osci A}}$ as $t_{\mathrm{latch, A}}$ (Fig.~\ref{fig:timing}).

Finally, we take into account the imperfect synchronization of the two node reference frames, $t_{\Delta AB}$. We determine this synchronisation offset by sending a reference pulse from an AWG at node A to the oscilloscope at node B, and vice versa, through the same cables. This allows us to compare the respective arrival times of the pulses, and thereby determine $t_{\Delta AB}$.

We measure all quantities during the randomness amplification experiment via a dedicated timing test sequence. Furthermore, we verify the absence of jitter of electronic devices involved in the setup continuously during the experiment, by monitoring the arrival time of reference pulses output by the relevant devices on the local oscilloscopes.

\begin{table}
		\centering
		\begin{tabular}{|c|c|c|c|}
			\hline
			Quantity at A & Value (ns) & Quantity at B & Value (ns) \\
			\hline
			$t_{\mathrm{RNG, A}}$ & 17.1 & $t_{\mathrm{RNG, B}}$ & 17.1\\
			$t_{\mathrm{latch, A}}$ & 2.0 & $t_{\mathrm{latch, B}}$ & 7.2\\
			$t_{\Delta A}$ & 5.0 & $t_{\Delta B}$ & 2.0\\
			$t_{\Delta AB}$ & 0.55 &  & \\
			$t_{\mathrm{detection, A}}$ & 86.0 & $t_{\mathrm{detection, B}}$ & 94.0\\
			\hline
			$t_{\mathrm{protocol}, A\rightarrow B}$ & 106.65 & $t_{\mathrm{protocol}, B\rightarrow A}$ & 106.45\\
			\hline
		\end{tabular}
		\caption{Measured durations of individual segments of the protocol, as described in the text. }
		\label{tab:timing}
	\end{table}

As no jitter was detected during the main experiment, and as the corresponding timing measurements (section~\hyperref[app:time_stab]{III}) yielded consistent results (up to the measurement precision), we ascertain the locality loophole to be closed. From the measured individual protocol segment durations introduced above, see Table~\ref{tab:timing}, we conclude that the total duration of the protocol was no more than $t_{\mathrm{protocol}, A\rightarrow B}=106.7$~ns and therefore $t_d-t_{\mathrm{protocol}, A\rightarrow B}=3.1$~ns shorter than the budget imposed by the Bell distance. We note that this margin is about 50$\%$ higher than for the loophole-free Bell test presented on the same setup \cite{Storz2023}, and that it is, constituting about 2.8$\%$ of the available time budget, in a similar regime as other loophole-free Bell tests \cite{Hensen2015,Giustina2015a,Shalm2015,Li2018i}.

\begin{figure*}
  \begin{center} \includegraphics[width=2.8\columnwidth, angle =-90]{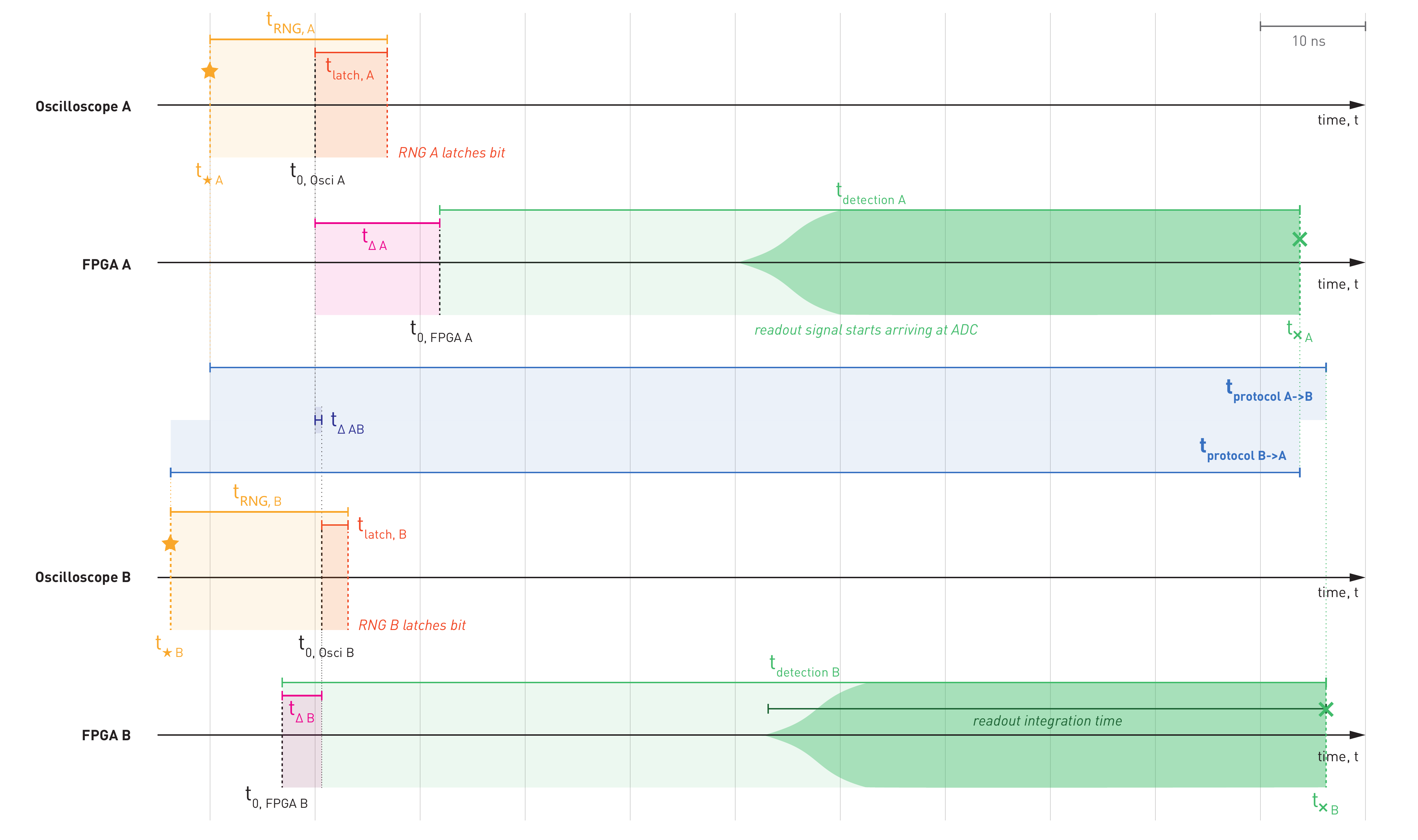}
  \caption{Timing overview of the randomness amplification experiment. See section~\hyperref[app:time_veri]{IV} for details.}
  \label{fig:timing}
  \end{center}
\end{figure*}

\section{V. Assumptions} 
\label{app:assumptions}

In this section we explicitly list all of the assumptions required for the security proof of our device-independent randomness amplification protocol.
Our source produces a sequence of $n$ random bits $X_{1}, \ldots, X_{n}$. We will denote the bit string consisting of the first $i$ bits as $X^{i} = X_{1} \ldots X_{i}$. We use similar notation for other bit strings such as $A^{i}, B^{i}, Y^{i}$ or $Z^{i}$. We denote by $I(A:B|C) = H(A|C) - H(A|BC)$ the conditional mutual information which can be loosely thought of as quantifying the amount of correlations between $A$ and $B$ if one already knows $C$.

We make the following assumptions:

\begin{enumerate}
    \item Quantum mechanics is correct and complete. In particular, the adversary can be described using quantum mechanics.
    \item The device $\lambda$ is shielded from the adversary, i.e., it does not leak any unwanted information. In particular, the outputs $A^{n}$ and $B^{n}$ are not leaked to the adversary.
    \item The classical information processing devices are trusted. In particular, this includes the ADC, the timing verification and the computer used for classical post-processing.
    \item The two sources are, possibly correlated, SV sources. This includes the assumption that Eve only has classical side-information about the sources (her side-information about the device $\lambda$ itself is allowed to be quantum).
    \item The outputs $X_{i}Y_{i}$ of the two sources in round $i$ are not influenced by the outputs $A^{i-1}B^{i-1}$ of the device from previous rounds. This assumption is justified since the QRNG is separated from the outputs of the devices \cite{Kessler2020}. More formally, we require that
    \begin{equation}
        I(X_{i}Y_{i}:A^{i-1}B^{i-1}|X^{i-1}Y^{i-1}E) = 0.
    \end{equation}
    Note that this condition does not require that $X_{i}Y_{i}$ is independent from $A^{i-1}B^{i-1}$ as they may be correlated through $X^{i-1}Y^{i-1}E$. In a similar fashion we require that
    \begin{equation}
        I(Z^{d}:A^{n}B^{n}|X^{n}Y^{n}E) = 0,
    \end{equation}
    where $Z^{d}$ is the second input to the randomness extractor which is drawn from the same SV source as $X^n$.
\end{enumerate}

\section{VI. Security proof}
Here we present the security proof for the implemented device-independent randomness amplification protocol. The proof holds under the assumptions given in Supplementary Information section V. The proof is based on combining the results from \cite{Kessler2020} with those from \cite{Arnon-Friedman2016,Dodis2004}. The main ingredient is the following bound on the smooth min-entropy derived in \cite{Kessler2020} which we adapt to our notation:

\begin{theorem} \label{thm:dira_hmin} (Theorem 4 in \cite{Kessler2020})
    Let $\rho$ be the state generated by the protocol after step 1, $S_{\mu} \in \mathbb{R}$, $\Omega$ be the event that $S_{\mu,\mathrm{obs}} \geq S_{\mu}$, $\varepsilon_{\mathrm{EA}}, \varepsilon_{s} \in (0, 1)$ be arbitrary, and $\rho_{|\Omega}$ be the state conditioned on $\Omega$. Then, under the assumptions outlined in section~\hyperref[app:assumptions]{V}, either $\mathrm{Pr}[\Omega] \leq \varepsilon_{\mathrm{EA}}$ or
    \begin{equation}
        H_{\mathrm{min}}^{\varepsilon_{s}}(\mathbf{AB}|\mathbf{XY}E)_{\rho_{|\Omega}} \geq n \eta(\varepsilon_{s}, \varepsilon_{\mathrm{EA}}, S_{\mu}, n, \mu),
    \end{equation}
    with the function $\eta$ as defined in \cite{Kessler2020}.
\end{theorem}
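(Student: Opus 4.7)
The plan is to apply the generalized entropy accumulation theorem (EAT) of \cite{Dupuis2020}. The protocol naturally decomposes into $n$ sequential rounds, each taking the internal quantum state of the device $\lambda$ together with classical inputs $(X_i, Y_i)$, and producing classical outputs $(A_i, B_i)$. To invoke the EAT, I would first verify the required Markov condition $A^{i-1}B^{i-1} \leftrightarrow X^{i-1}Y^{i-1}E \leftrightarrow X_i Y_i$. This follows directly from assumption~5 in Supplementary Information section V, which guarantees that the fresh inputs $(X_i, Y_i)$ of the SV source do not receive extra information from past device outputs beyond what is already contained in the previous inputs and in the adversary's register $E$.

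The technical heart of the proof is the construction of an affine min-tradeoff function $f$ such that, for any single-round strategy producing output statistics compatible with an MDL violation $S_\mu$, the conditional von Neumann entropy satisfies $H(A_i B_i \mid X_i Y_i E) \geq f(S_\mu)$. This bound must hold uniformly over all quantum realizations of the device consistent with the observed statistics, while taking into account that the inputs are $\mu$-biased rather than uniform. I would construct $f$ as in \cite{Kessler2020}: start from a numerical lower bound on the single-round guessing probability as a function of the MDL value (obtained via an SDP relaxation in the NPA hierarchy on the set of MDL-compatible correlations with $\mu$-bounded inputs), convert it into a von Neumann entropy bound, and then linearize it at the expected operating point $S_\mu$ to obtain an affine function suitable for the EAT.

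With the min-tradeoff function in hand, the EAT directly yields a dichotomy: either the event $\Omega = \{S_{\mu,\mathrm{obs}} \geq S_\mu\}$ has probability $\leq \varepsilon_{\mathrm{EA}}$, or conditioned on $\Omega$ one has
\begin{equation}
H_{\mathrm{min}}^{\varepsilon_s}(\mathbf{AB} \mid \mathbf{XY} E)_{\rho_{\mid\Omega}} \geq n\, f(S_\mu) - c\sqrt{n},
\end{equation}
where the constant $c$ depends on $\varepsilon_s$, $\varepsilon_{\mathrm{EA}}$, and the slope and maximum of $f$. Collecting $f(S_\mu)$ together with the $O(1/\sqrt{n})$ correction into a single function gives the $\eta(\varepsilon_s, \varepsilon_{\mathrm{EA}}, S_\mu, n, \mu)$ in the statement.

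The main obstacle is the construction of the min-tradeoff function $f$. It must simultaneously (i)~be a valid lower bound on the single-round entropy over all quantum strategies, (ii)~properly incorporate the $\mu$-bias so that it degrades gracefully as $\mu$ grows and vanishes at the tolerable-bias threshold, and (iii)~have a controlled gradient near the operating point so that the $O(\sqrt{n})$ correction does not overwhelm $n\,f(S_\mu)$ for experimentally accessible $n$. A loose linearization, or a bound that ignores the measurement-dependence structure built into the MDL inequality, will produce a trivial rate; the SDP-based analysis of \cite{Kessler2020} is precisely what yields a non-trivial $\eta$ at $S_\mu$-values comparable to what we observe.
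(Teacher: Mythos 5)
The paper does not actually prove this statement; it is imported verbatim as Theorem 4 of~\cite{Kessler2020}, and the only internal discussion is the informal EAT summary in the Methods (``Entropy production'' paragraph and Eq.~(4)). With that understood, your sketch is a faithful reconstruction of the route taken in~\cite{Kessler2020} and gestured at in the Methods: round-by-round decomposition, verification of the Markov condition $A^{i-1}B^{i-1}\leftrightarrow X^{i-1}Y^{i-1}E\leftrightarrow X_iY_i$ from Assumption~5, an affine min-tradeoff function obtained from an SDP/NPA bound on the single-round guessing probability compatible with an MDL value under $\mu$-biased inputs, followed by the EAT dichotomy $H_{\mathrm{min}}^{\varepsilon_s}(\mathbf{AB}|\mathbf{XY}E)_{\rho_{|\Omega}}\geq n f(S_\mu)-c\sqrt{n}$. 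One terminological nit: what is needed here is the original EAT of~\cite{Dupuis2020} (which requires exactly the Markov condition you invoke), not the later ``generalized'' variant that replaces it with a non-signaling condition; your text says ``generalized'' but then states the Markov chain, so you clearly mean the former.

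Where your write-up falls short of a proof is precisely where you flag it yourself: $\eta(\varepsilon_s,\varepsilon_{\mathrm{EA}},S_\mu,n,\mu)$ in the theorem is an explicit function, and ``collect $f(S_\mu)$ and the $O(\sqrt{n})$ correction into $\eta$'' hides the content that makes the rate non-trivial --- the choice of tangent point and the gradient bound of the affine min-tradeoff function, and the way $\varepsilon_s$ and $\varepsilon_{\mathrm{EA}}$ feed into the second-order term. You also assert the ``either $\Pr[\Omega]\leq\varepsilon_{\mathrm{EA}}$ or $\ldots$'' clause without tracing that it arises from conditioning the EAT output on the test event $\Omega$, whose probability enters the entropy-loss bound. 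Since the paper itself only cites this theorem rather than reproving it, these deferrals are the same ones the paper makes; your sketch is consistent with, and no more incomplete than, the source it would be pointing to.
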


Next, we will introduce the formal extractor definitions.

\begin{definition}
    Let $m, n_{1}, n_{2} \in \mathbb{N}_{>0}$, $k_{1}, k_{2} \in \mathbb{R}$, and $\varepsilon \in (0, 1)$. A function $\mathrm{Ext}: \{0, 1\}^{n_{1}} \times \{0, 1\}^{n_{2}} \rightarrow \{0, 1\}^{m}$ is called a $(k_{1}, k_{2}, \varepsilon)$ $2$-source extractor if for all independent random variables $X_{1}$ and $X_{2}$ with min-entropies $H_{\mathrm{min}}(X_{1}) \geq k_{1}$ and $H_{\mathrm{min}}(X_{2}) \geq k_{2}$, the following condition holds:
    \begin{equation}
        \frac{1}{2} \|\mathrm{Ext}(X_{1}, X_{2}) - U_{m}\|_{1} \leq \varepsilon,
    \end{equation}
    where $U_{m}$ is the uniform distribution on $\{0, 1\}^{m}$. The extractor is called $X_{1}$-strong if in addition
    \begin{equation}
        \frac{1}{2} \|\mathrm{Ext}(X_{1}, X_{2})X_{1} - U_{m}X_{1}\|_{1} \leq \varepsilon.
    \end{equation}
\end{definition}

Many constructions for extractors exist. Given the large size of our dataset ($n \sim 10^{9}$), we need an extractor which can be efficiently implemented. We choose the following construction based on non-cyclic right shift matrices from \cite{Dodis2004}.

\begin{theorem} \label{thm:dodis_extractor} (Lemma 4 in \cite{Dodis2004})
    Let $m, n \in \mathbb{N}$ and $k_{1}, k_{2} \in \mathbb{R}$. Consider the function
    \begin{equation}
    \begin{aligned}
        \mathrm{Ext}: \{0, 1\}^{n} \times \{0, 1\}^{n} &\rightarrow \{0, 1\}^{m} \\
            (\mathbf{x}, \mathbf{y}) & \mapsto (\mathbf{x}^{T} A_{1} \mathbf{y}, \ldots, \mathbf{x}^{T} A_{m} \mathbf{y}),
    \end{aligned}
    \end{equation}
    where $A_{i}$ is the matrix representing a non-cyclic right shift by $i-1$ positions, $\mathbf{x}^{T}$ represents the transpose of $\mathbf{x}$, and all additions are modulo two. Then, the function $\mathrm{Ext}$ is a $(k_{1}, k_{2}, \varepsilon)$ two-source extractor with $\varepsilon = 2^{-\frac{1}{2}(k_{1} + k_{2} - n - 2m)}$. Furthermore, $\mathrm{Ext}$ is $X_{1}$-strong.
\end{theorem}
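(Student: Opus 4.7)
The plan is to reduce the $m$-bit extraction claim to a uniform bias bound on a single bilinear form via Vazirani's XOR lemma, and then to exploit the algebraic structure of the non-cyclic right shift matrices $\{A_i\}$ to verify that bias bound for every nonzero Fourier character. Working over $\mathbb{F}_2$ with non-trivial character $\chi(b)=(-1)^b$, and denoting by $\hat P(v)=\mathbb{E}[\chi(\langle v,Z\rangle)]$ the Fourier coefficients of a distribution on $\{0,1\}^m$, Parseval combined with Cauchy--Schwarz gives
\begin{equation}
    \tfrac{1}{2}\|P_Z - U_m\|_1 \;\leq\; 2^{m/2-1}\max_{v\neq 0}|\hat P(v)|.
\end{equation}
Since $\langle v,\mathrm{Ext}(X_1,X_2)\rangle = X_1^{T} M_v X_2$ with $M_v := \sum_{i=1}^{m} v_i A_i$, the problem reduces to bounding $|\mathbb{E}[\chi(X_1^{T} M_v X_2)]|$ uniformly in $v\neq 0$.

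Next I would establish a Lindsey-type bound for the bilinear form: for any matrix $M\in\mathbb{F}_2^{n\times n}$ of rank $r$ and independent sources $X_1,X_2$ with $H_{\min}(X_j)\ge k_j$,
\begin{equation}
    \bigl|\mathbb{E}[\chi(X_1^{T} M X_2)]\bigr|^2 \;\leq\; 2^{n-r-k_1} \;+\; 2^{(n-k_2)/2}.
\end{equation}
The derivation is the standard ``squaring trick'': write the squared bias as a double expectation over an independent copy $X_1'$ of $X_1$, observe that the inner expectation over $X_2$ of $\chi((X_1+X_1')^{T} M X_2)$ is bounded by $2^{(n-k_2)/2}$ whenever $M^{T}(X_1+X_1')\neq 0$ (by Parseval applied to the min-entropy source $X_2$), and bound the probability of the exceptional event $M^{T}(X_1+X_1')=0$ by $|\ker M^{T}|\cdot 2^{-k_1}=2^{n-r-k_1}$.

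The main obstacle, and the reason the specific non-cyclic shift construction is chosen, is the rank estimate $\mathrm{rank}(M_v)\ge n-m+1$ for every nonzero $v\in\{0,1\}^m$. Here $M_v$ is a banded Toeplitz matrix whose nonzero entries are determined by the support of $v$; letting $j^{\ast}$ be the largest index with $v_{j^{\ast}}=1$, all entries of $M_v$ strictly above the $(j^{\ast}-1)$-th super-diagonal vanish (since only $A_i$ with $i\le j^{\ast}$ contribute), while the $(j^{\ast}-1)$-th super-diagonal itself is entirely filled with $1$'s (contributed solely by $A_{j^{\ast}}$, as higher shifts are absent from $M_v$). Reading $M_v$ along this super-diagonal exposes an $(n-j^{\ast}+1)\times(n-j^{\ast}+1)$ upper-triangular submatrix with unit diagonal, giving $\mathrm{rank}(M_v)\ge n-j^{\ast}+1\ge n-m+1$. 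Plugging $r=n-m+1$ into the Lindsey bound yields $|\mathbb{E}[\chi(X_1^{T}M_v X_2)]|\le 2^{(m-1+n-k_1-k_2)/2}$ (up to an absorbable additive term), and combining with the XOR lemma gives $\varepsilon \le 2^{-(k_1+k_2-n-2m)/2}$ as claimed.

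Finally, the $X_1$-strong statement is handled by applying the same Fourier expansion to the joint distribution of $\mathrm{Ext}(X_1,X_2)X_1$. Characters that act trivially on the $\mathrm{Ext}$-component cancel exactly against the product of the $X_1$-marginal with the uniform distribution, so only characters with a nontrivial action on $\mathrm{Ext}(X_1,X_2)$ contribute; each such character is bounded by conditioning on $X_1=x_1$, noting that independence preserves $H_{\min}(X_2)\ge k_2$, and invoking the same Lindsey estimate pointwise. Averaging over $X_1$ and applying the XOR lemma once more reproduces the same $\varepsilon$ with $X_1$ appended to the output.
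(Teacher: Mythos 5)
The paper does not prove this statement itself; it is imported verbatim as Lemma~4 of Dodis, Elbaz, Oliveira, and Raz (2004). So there is no internal proof to compare against, and your attempt must stand on its own merits. The overall architecture you set up---the $\ell_1$/Fourier XOR lemma, a bilinear bias bound depending on $\mathrm{rank}(M_v)$, and the rank estimate for $M_v=\sum_i v_iA_i$---is exactly the right skeleton, and your rank argument ($M_v$ has an all-ones $(j^\ast{-}1)$-th superdiagonal with nothing above it, hence an upper-triangular $(n-j^\ast{+}1)$-minor of full rank) is correct.

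The gap is in the ``Lindsey-type'' step. The squaring trick with the worst-case Fourier coefficient bound $\max_{u\ne0}|\widehat{P_{X_2}}(u)|\le 2^{(n-k_2)/2}$ gives the \emph{additive} estimate
\[
\bigl|\mathbb{E}[\chi(X_1^{T} M X_2)]\bigr|^2 \;\le\; 2^{\,n-r-k_1} \;+\; 2^{(n-k_2)/2},
\]
which is a sum, not the product $2^{\,2n-r-k_1-k_2}$ that you then assert (``up to an absorbable additive term''). The two summands scale differently in $k_1$ and $k_2$, and there are parameter ranges (e.g.\ $k_1,k_2$ both near $n$ and $m$ small) where the second summand stays $\Theta(1)$ while the target $2^{(m-1+n-k_1-k_2)/2}$ is exponentially small---so nothing ``absorbs,'' and the asserted inequality does not follow from what you proved. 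The fix is to avoid squaring altogether: write $\mathbb{E}[\chi(X_1^{T} M X_2)]=\sum_{x_1}P_{X_1}(x_1)\,\widehat{P_{X_2}}(M^{T}x_1)$, apply Cauchy--Schwarz in $x_1$ to peel off $\|P_{X_1}\|_2\le 2^{-k_1/2}$, and bound $\sum_{x_1}|\widehat{P_{X_2}}(M^{T}x_1)|^2\le|\ker M^{T}|\sum_u|\widehat{P_{X_2}}(u)|^2\le 2^{\,n-r}\cdot 2^{\,n-k_2}$ by Parseval. This yields the multiplicative bound $|\mathbb{E}[\chi(X_1^{T} M X_2)]|\le 2^{(2n-r-k_1-k_2)/2}$; with $r\ge n-m+1$ and the XOR lemma it reproduces (in fact slightly improves) the stated $\varepsilon$. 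Your $X_1$-strong paragraph suffers from the same issue more acutely: conditioned on $X_1=x_1$ the first source has zero min-entropy, so there is no ``pointwise Lindsey estimate'' to invoke. You must keep the $x_1$-average inside the square root via Jensen, i.e.\ bound $\mathbb{E}_{x_1}\sqrt{\sum_{v\neq0}|\widehat{P_{X_2}}(M_v^{T}x_1)|^2}$ by $\sqrt{\sum_{v\neq0}\mathbb{E}_{x_1}|\widehat{P_{X_2}}(M_v^{T}x_1)|^2}$ and then run the same Cauchy--Schwarz/Parseval argument term by term in $v$.
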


Next, we state the extractor definition used for our device-independent randomness amplification protocol.

\begin{definition} (Definition 8 in \cite{Arnon-Friedman2016})
    Let $m, n_{1}, n_{2} \in \mathbb{N}_{>0}$, $k_{1}, k_{2} \in \mathbb{R}$, and $\varepsilon \in (0, 1)$. A function $\mathrm{Ext}: \{0, 1\}^{n_{1}} \times \{0, 1\}^{n_{2}} \rightarrow \{0, 1\}^{m}$ is called a $(k_{1}, k_{2}, \varepsilon)$ quantum-proof $2$-source extractor in the Markov model if for all tripartite classical-classical-quantum states $\rho_{X_{1} X_{2} C}$ satisfying the Markov chain condition $I(X_{1}:X_{2}|C) = 0$ with min-entropies $H_{\mathrm{min}}(X_{1}|C)_{\rho} \geq k_{1}$ and $H_{\mathrm{min}}(X_{2}|C)_{\rho} \geq k_{2}$, the following condition holds:
    \begin{equation}
        \frac{1}{2} \|\rho_{\mathrm{Ext}(X_{1}, X_{2})C} - \omega_{m} \otimes \rho_{C}\|_{1} \leq \varepsilon,
    \end{equation}
    where $\omega_{m}$ is the maximally mixed state on $m$ qubits. The extractor is called $X_{1}$-strong if in addition
    \begin{equation}
        \frac{1}{2} \|\rho_{\mathrm{Ext}(X_{1}, X_{2})X_{1} C} - \omega_{m} \otimes \rho_{X_{1} C}\|_{1} \leq \varepsilon,
    \end{equation}
\end{definition}

The second core ingredient is the following result from \cite{Arnon-Friedman2016} which shows that any two-source extractor remains secure against quantum side-information with adjusted parameters.

\begin{theorem} \label{thm:extr_reduction} (Theorem 2 in \cite{Arnon-Friedman2016} (for $l=2$))
    Any $(k_{1}, k_{2}, \varepsilon)$ $X_{1}$-strong $2$-source extractor with output length $m$ is a $\left( k_{1} + \log \frac{1}{\varepsilon}, k_{2} + \log \frac{1}{\varepsilon}, \sqrt{3 \varepsilon 2^{m-2}} \right)$ $X_{1}$-strong quantum-proof$\;2$-source extractor in the Markov model.
\end{theorem}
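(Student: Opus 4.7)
The plan is to combine two main ingredients: a measurement-based reduction from quantum to classical side-information that respects the Markov condition, and a careful error-tracking argument that invokes the classical $X_{1}$-strong extractor property.

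First I would fix a classical-classical-quantum state $\rho_{X_{1} X_{2} C}$ satisfying $I(X_{1}:X_{2}|C)_{\rho}=0$ with $H_{\mathrm{min}}(X_{1}|C)_{\rho} \geq k_{1}+\log(1/\varepsilon)$ and $H_{\mathrm{min}}(X_{2}|C)_{\rho} \geq k_{2}+\log(1/\varepsilon)$. The first structural observation is that the quantum Markov chain condition induces a decomposition of the $C$-system (via the Hayden--Jozsa--Petz--Winter structure theorem) into a direct sum of tensor products $C_{L}^{(j)} \otimes C_{R}^{(j)}$ such that $X_{1}$ is correlated only with the ``left'' factors and $X_{2}$ only with the ``right'' factors, conditioned on the classical label $j$. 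This splitting is what makes it possible to treat the two sides independently while preserving conditional independence.

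Next I would measure the quantum side-information, reducing to the classical regime where Theorem~\ref{thm:dodis_extractor} (or any classical $X_1$-strong two-source extractor assumed in the hypothesis) directly applies. The operational interpretation of smooth conditional min-entropy (K\"onig--Renner--Schaffner) gives, for each side, a measurement on the corresponding factor of $C$ producing a classical register $C'$ such that the classical min-entropy satisfies $H_{\mathrm{min}}(X_{i}|C') \geq H_{\mathrm{min}}(X_{i}|C) - \log(1/\varepsilon) \geq k_{i}$, with the reduction costing at most a smoothing error of $\varepsilon$ in trace distance. Because the splitting from Step~1 decouples the two sides, the post-measurement state is still classical-classical-classical with $X_{1}$ independent of $X_{2}$ given $C'$, and the min-entropy bounds are exactly those required by the hypothesized classical extractor. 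Applying the $X_{1}$-strong guarantee conditional on each value of $C'$ and averaging yields $\tfrac{1}{2}\|P_{\mathrm{Ext}(X_{1},X_{2})X_{1}C'}-U_{m}\times P_{X_{1}C'}\|_{1}\le\varepsilon$.

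Finally I would lift this classical $\ell_1$-bound back to a trace-distance bound on the original quantum state. The standard tool here is the inequality relating the operational distinguishing advantage achievable by a measurement on $\omega_{m}\otimes\rho_{X_{1}C}$ versus $\rho_{\mathrm{Ext}(X_{1},X_{2})X_{1}C}$ to the trace distance: one can, as in \cite{Arnon-Friedman2016}, bound the trace distance by $\sqrt{2^{m}}$ times the square root of the classical Euclidean (or $\ell_{2}$) distance achieved above. Combining the measurement-reduction error with this $\sqrt{\varepsilon\,2^{m}}$-type amplification and tracking constants carefully produces the stated bound $\sqrt{3\varepsilon\,2^{m-2}}$. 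The main obstacle I expect is Step~2: constructing the measurement that simultaneously reduces both $X_{1}$-side and $X_{2}$-side quantum information to classical registers without breaking the Markov chain, while controlling the min-entropy loss by exactly $\log(1/\varepsilon)$; this is where the Markov structure theorem is indispensable, since a naive joint measurement on $C$ could introduce spurious correlations between $X_{1}$ and $X_{2}$ and invalidate the classical extractor hypothesis.
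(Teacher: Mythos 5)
The paper does not supply a proof of this theorem; it is stated purely as a citation to Theorem~2 of Arnon-Friedman, Portmann and Scholz (for $l=2$) and used as a black box inside Theorem~\ref{thm:security_thm}. So a comparison against ``the paper's own proof'' is not possible, and what follows evaluates your sketch on its own terms.

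Your high-level scaffolding---HJPW decomposition of $C$ under the Markov condition, a reduction to the classical extractor hypothesis, and a dimension-dependent amplification factor---matches the spirit of the Arnon-Friedman argument, but two steps are misplaced in a way that would make the proof collapse if written out. First, the $\log\frac{1}{\varepsilon}$ entropy loss cannot come from the measurement on $C$: applying any CPTP map (in particular a measurement) to the conditioning system can only \emph{increase} $H_{\mathrm{min}}(X_i \mid C)$, since by the K\"onig--Renner--Schaffner operational meaning this is $-\log p_{\mathrm{guess}}(X_i\mid C)$ and restricting the adversary to a fixed measurement can only lower her guessing probability. The $\log\frac{1}{\varepsilon}$ is instead a standard ``conditioning'' overhead: the classical extractor needs $H_{\mathrm{min}}(X_i\mid C'=c')\ge k_i$ \emph{pointwise} in $c'$, and passing from the averaged quantity $H_{\mathrm{min}}(X_i\mid C')$ to a pointwise bound holding for all but an $\varepsilon$-fraction of values $c'$ is what costs $\log\frac{1}{\varepsilon}$. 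Second, and more seriously, the measurement reduction ``goes the wrong way'' for the quantity you ultimately need to bound. Data processing on $C$ can only \emph{decrease} the trace distance $\tfrac12\lVert\rho_{\mathrm{Ext}(X_1,X_2)X_1C}-\omega_m\otimes\rho_{X_1C}\rVert_1$, so establishing the bound after measuring $C$ into $C'$ does \emph{not} recover a bound for the unmeasured quantum side information. This is precisely where the $\sqrt{2^m}$-type factor is doing real work in the cited paper, via an argument specific to (strong) extractors that lets one compare the quantum distance against a classical $\ell_2$-type error over all possible output strings, not a naive post-measurement data-processing step. As sketched, your Step~3 gestures at this but treats it as routine, while it is in fact the crux; without it the argument has a gap that measurement by itself cannot close.
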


Since Theorem \ref{thm:dira_hmin} only provides a bound on the smooth min-entropy, we need to show that the extractor $\mathrm{Ext}$ from Theorem \ref{thm:dodis_extractor} is still a good extractor when we only have a bound on $H_{\mathrm{min}}^\varepsilon$ instead of $H_{\mathrm{min}}$. This is the content of the next theorem.

\begin{theorem} \label{thm:smooth_extr} (Lemma 17 in \cite{Arnon-Friedman2016})
    Let $\varepsilon_{1}, \varepsilon_{2}, \delta_{1}, \delta_{2} \in (0, 1)$ be arbitrary and $\mathrm{Ext}: \{0,1\}^{n_{1}} \times \{0,1\}^{n_{2}} \rightarrow \{0,1\}^{m}$ be a $\left( k_{1} - \log \frac{1}{\varepsilon_{1}}, k_{2} - \log \frac{1}{\varepsilon_{2}}, \varepsilon \right)$ quantum-proof two-source extractor in the Markov model. Then for any classical-classical-quantum state $\rho_{X_{1}X_{2}C}$ satisfying the Markov chain condition $I(X_{1}:X_{2}|C) = 0$ with $H_{\mathrm{min}}^{\delta_{1}}(X_{1}|C)_{\rho} \geq k_{1}$ and $H_{\mathrm{min}}^{\delta_{2}}(X_{2}|C)_{\rho} \geq k_{2}$,
    \begin{equation}
        \frac{1}{2} \|\rho_{\mathrm{Ext}(X_{1}, X_{2}) C} - \omega_{m} \otimes \rho_{C}\|_{1} \leq 6 \delta_{1} + 6 \delta_{2} + 2 \varepsilon_{1} + 2 \varepsilon_{2}  + 2 \varepsilon.
    \end{equation}
\end{theorem}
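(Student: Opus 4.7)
The plan is to reduce the smooth min-entropy hypothesis to the exact min-entropy setting where the given quantum-proof extractor guarantee directly applies, and to absorb the reduction losses via a triangle inequality in trace distance. Concretely, I would construct an intermediate classical-classical-quantum state $\tau_{X_1 X_2 C}$ that (i) is close to $\rho$ in trace distance, (ii) still satisfies the Markov chain condition $I(X_1:X_2|C)_{\tau} = 0$, and (iii) has exact conditional min-entropies $H_{\min}(X_i|C)_{\tau} \geq k_i - \log(1/\varepsilon_i)$. The extractor assumption then yields $\tfrac{1}{2}\|\tau_{\mathrm{Ext}(X_1,X_2) C} - \omega_m \otimes \tau_C\|_1 \leq \varepsilon$, and since $\mathrm{Ext}$ is a deterministic (hence CPTP) map on $X_1 X_2$, the quantity to be bounded on $\rho$ differs from this one by at most $\|\rho - \tau\|_1$, giving the claim after a final triangle inequality.

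The heart of the proof is the construction of $\tau$. By the definition of smooth min-entropy, there exist CCQ states $\tilde\rho^{(i)}_{X_i C}$ within purified distance $\delta_i$ of the marginal $\rho_{X_i C}$ and with exact $H_{\min}(X_i|C)_{\tilde\rho^{(i)}} \geq k_i$. These one-sided witnesses must be merged into a single tripartite state that respects the Markov condition and achieves both min-entropy bounds at once. To do so, I would work in the branch decomposition afforded by the Markov condition on $\rho$: for classical $C$ this is $\rho = \sum_c p(c)\, \rho_{X_1|c} \otimes \rho_{X_2|c} \otimes \ket{c}\bra{c}$, and for quantum $C$ there is the analogous quantum-Markov decomposition of $C$ as a direct sum of tensor products. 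Smoothing each conditional factor $\rho_{X_i|c}$ separately preserves the product structure branch by branch, hence the Markov condition of the global state. Discarding the exceptional set of branches on which one of the marginal smoothings is incompatible with the joint structure, of total weight at most $\varepsilon_i$, then costs an additive $\log(1/\varepsilon_i)$ in min-entropy and an extra $\varepsilon_i$ in trace distance to $\rho$.

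With $\tau$ in hand, the accounting is routine: the smoothing and truncation errors each feed into $\tfrac{1}{2}\|\rho - \tau\|_1$, which propagates through $\mathrm{Ext}$ on the left of the target inequality, and separately into $\tfrac{1}{2}\|\omega_m \otimes \rho_C - \omega_m \otimes \tau_C\|_1$ on the right; together with the extractor's own error $\varepsilon$ and with each error being tallied through both the $X_1$-smoothing and the $X_2$-smoothing chain of inequalities, the bookkeeping produces the constants $6\delta_1 + 6\delta_2 + 2\varepsilon_1 + 2\varepsilon_2 + 2\varepsilon$. The main obstacle, as flagged above, is the construction step: each smoothing witness is defined only on a bipartite marginal, so there is no a priori guarantee that they extend compatibly to a single tripartite Markov state achieving both bounds. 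Having to obtain both min-entropy bounds simultaneously from two a priori incompatible witnesses is exactly what forces one to work at the level of the conditional decomposition and to introduce the $\log(1/\varepsilon_i)$ slack in the extractor parameters.
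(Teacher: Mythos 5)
The paper does not prove Theorem~\ref{thm:smooth_extr}; it is used as a citation to Lemma~17 of \cite{Arnon-Friedman2016}, so there is no in-paper proof to compare against. Your skeleton---construct a Markov-preserving intermediate state $\tau$ with exact conditional min-entropies at least $k_i - \log\frac{1}{\varepsilon_i}$, invoke the extractor guarantee on $\tau$, and close with a triangle inequality---is the right skeleton and matches the cited argument, and you correctly flag the central obstacle, namely that the two smoothing witnesses live on bipartite marginals and do not obviously combine into one tripartite Markov state.

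Your proposed construction of $\tau$, however, has a genuine gap. You propose to discard ``the exceptional set of branches'' of the Markov decomposition of $C$ with total weight at most $\varepsilon_i$. Removing low-weight values or blocks of $C$ does not yield the required bound $H_{\mathrm{min}}(X_i|C)_\tau \geq k_i - \log\frac{1}{\varepsilon_i}$: conditional min-entropy is governed by the worst-case guessing probability over the surviving support of $C$, and trimming a low-probability subset of $C$ neither improves that worst case nor reproduces what the smoothing witness $\tilde\rho_{X_iC}$ guarantees. The mechanism in the cited proof is different: one applies a multiplicative weight $g_i:\mathcal{X}_i \to [0,1]$ to the \emph{classical register $X_i$ alone}, damping $x_i$-values whose conditional probabilities exceed a threshold. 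Because $g_i$ acts only on $X_i$, it commutes with any decomposition of $C$ and trivially preserves $I(X_1:X_2|C)=0$; the threshold level is exactly what produces the $\log\frac{1}{\varepsilon_i}$ slack. Note also that you assert the constants $6\delta_1 + 6\delta_2 + 2\varepsilon_1 + 2\varepsilon_2 + 2\varepsilon$ by ``bookkeeping'' without deriving them; the factors of $6$ in $\delta_i$ come from the purified-distance/trace-distance conversions implicit in the smooth min-entropy definition and from the doubling when the difference $\rho_C - \tau_C$ re-enters through the $\omega_m\otimes\rho_C$ side, and this accounting is not routine.
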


\begin{remark}
    As noted in \cite{Arnon-Friedman2016} it can be readily checked that an analogous statement also holds for $X_{1}$-strong extractors.
\end{remark}

\begin{theorem} \label{thm:security_thm}
    Let $\rho$ be the state generated at the end of the protocol, $\varepsilon, \varepsilon_{\mathrm{EA}}, \varepsilon_{s} \in (0, 1)$ be arbitrary, $S_{\mu} \in \mathbb{R}$, and the output size $m \in \mathbb{N}$ be such that
    \begin{equation}
    \begin{aligned} \label{eq:extractor_condition}
        m \leq & \frac{n}{6}\left(\eta(\varepsilon_{s}, \varepsilon_{\mathrm{EA}}, S_{\mu}, n, \mu) + 2\log \frac{1}{1/2 + \mu} - 2 \right) \\
        &- \log \frac{6}{\varepsilon - 6\varepsilon_{s}}.
    \end{aligned}
    \end{equation}
    Let $\Omega$ be the event that $S_{\mu, \mathrm{obs}} \geq S_{\mu}$. Then, under the assumptions outlined in section~\hyperref[app:assumptions]{V}, either $\mathrm{Pr}[\Omega] \leq \varepsilon_{\mathrm{EA}}$ or
    \begin{equation} \label{eq:tr_distance_bound}
        \frac{1}{2} \| \rho_{\mathbf{KXYZ}E|\Omega} - \omega_{m} \otimes \rho_{\mathbf{XYZ}E|\Omega} \|_{1} \leq \varepsilon.
    \end{equation}
\end{theorem}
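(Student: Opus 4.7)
The plan is to chain together the four preparatory theorems of this section. The strategy is to (i) bound the smooth min-entropy of the device outputs $\mathbf{AB}$ via entropy accumulation (Theorem \ref{thm:dira_hmin}); (ii) bound the min-entropy of the seed $\mathbf{Z}$ using the SV property; (iii) verify the Markov chain condition $I(\mathbf{AB}:\mathbf{Z}|\mathbf{XY}E) = 0$; and (iv) invoke the Dodis construction together with its quantum-proof lift and its smooth-entropy version to derive the trace-distance bound \eqref{eq:tr_distance_bound}.

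First, I would apply Theorem \ref{thm:dira_hmin} to $\rho_{|\Omega}$. Either $\Pr[\Omega] \le \varepsilon_{\mathrm{EA}}$, in which case the conclusion of the theorem holds vacuously, or $H_{\min}^{\varepsilon_s}(\mathbf{AB}|\mathbf{XY}E)_{\rho_{|\Omega}} \ge n\,\eta(\varepsilon_s, \varepsilon_{\mathrm{EA}}, S_\mu, n, \mu)$. In parallel, the $\mu$-SV property of the source directly yields
\begin{equation}
H_{\min}(\mathbf{Z}|\mathbf{XY}E) \ge d\,\log\frac{1}{1/2+\mu},
\end{equation}
since each bit of $\mathbf{Z}$ is unpredictable to Eve given all earlier bits (in particular those composing $\mathbf{X}$ and $\mathbf{Y}$). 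The Markov condition $I(\mathbf{AB}:\mathbf{Z}|\mathbf{XY}E) = 0$ is precisely Assumption 5 of section \hyperref[app:assumptions]{V}, encoding the physical shielding of the QRNG producing $\mathbf{Z}$ from the device outputs $\mathbf{AB}$.

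Next, I would identify $X_1 = \mathbf{Z}$ and $X_2 = \mathbf{AB}$ with conditioning system $C = \mathbf{XY}E$, and chain Theorems \ref{thm:dodis_extractor}, \ref{thm:extr_reduction}, and \ref{thm:smooth_extr}. Theorem \ref{thm:dodis_extractor} supplies a classical $\mathbf{Z}$-strong two-source extractor; Theorem \ref{thm:extr_reduction} lifts it to a $\mathbf{Z}$-strong quantum-proof extractor in the Markov model, replacing the classical error $\varepsilon_{\mathrm{ext}}$ by $\sqrt{3\,\varepsilon_{\mathrm{ext}}\,2^{m-2}}$; finally, Theorem \ref{thm:smooth_extr} accommodates the smooth min-entropy bound on $\mathbf{AB}$, paying additional error terms $6\delta_2 = 6\varepsilon_s$ and $2\varepsilon_1, 2\varepsilon_2$. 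Because the lifted extractor is $\mathbf{Z}$-strong, the output is independent not only of $\mathbf{XY}E$ but also of $\mathbf{Z}$, which is exactly what is required on the right-hand side of \eqref{eq:tr_distance_bound}. Substituting the two min-entropy bounds and balancing $\varepsilon_1, \varepsilon_2$, and $\varepsilon_{\mathrm{ext}}$ so that the five error contributions sum to at most $\varepsilon$ produces the condition \eqref{eq:extractor_condition} on $m$.

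The main obstacle is the precise bookkeeping of these five error contributions against the available entropy $n\eta + d\log\tfrac{1}{1/2+\mu}$: the quantum-proof lift introduces the square-root loss $\sqrt{3\varepsilon_{\mathrm{ext}} 2^{m-2}}$ which, combined with the $\log(1/\varepsilon_i)$ penalties of Theorem \ref{thm:smooth_extr}, determines the numerical prefactors appearing in \eqref{eq:extractor_condition}. A secondary subtlety is to justify that conditioning on $\Omega$ does not spoil either the SV bound on $H_{\min}(\mathbf{Z}|\mathbf{XY}E)$ or the Markov condition; this holds because $\mathbf{Z}$ is sampled after parameter estimation and is independent of $\mathbf{AB}$ given $\mathbf{XY}E$ by Assumption 5, so the event $\Omega$, being a function of $\mathbf{XYAB}$, affects the conditional distribution of $\mathbf{Z}$ only through $\mathbf{XY}E$.
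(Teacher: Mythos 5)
Your proposal is correct and follows essentially the same argument as the paper: bound $H_{\mathrm{min}}^{\varepsilon_s}(\mathbf{AB}|\mathbf{XY}E)_{\rho_{|\Omega}}$ via Theorem~\ref{thm:dira_hmin}, bound $H_{\mathrm{min}}(\mathbf{Z}|\mathbf{XY}E)$ from the SV property, invoke Assumption~5 for the Markov condition, and chain Theorems~\ref{thm:dodis_extractor}, \ref{thm:extr_reduction}, and \ref{thm:smooth_extr} with $X_1=\mathbf{Z}$-strongness and $C=\mathbf{XY}E$ to obtain \eqref{eq:tr_distance_bound}. The paper compresses the error balancing (your $\varepsilon_1,\varepsilon_2,\varepsilon_{\mathrm{ext}}$ bookkeeping, with $d=2n$) into the phrase ``a direct calculation,'' but the structure is identical.
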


\begin{remark}
    Note that the condition in (\ref{eq:tr_distance_bound}) says that the output of the extractor looks random even if the adversary learns the output of the source ($\mathbf{XYZ}$). This property is also called \emph{privatization}.
\end{remark}

\begin{proof}
    Let $\rho$ be such that the probability of aborting is less than $1 - \varepsilon_{\mathrm{EA}}$ (otherwise the statement of the Theorem follows trivially). By applying Theorems \ref{thm:dodis_extractor}, \ref{thm:extr_reduction} and \ref{thm:smooth_extr}, a direct calculation shows that the condition in equation (\ref{eq:extractor_condition}) implies that our extractor is secure with security parameter $\varepsilon$ as long as
    \begin{equation}
    \begin{aligned}    
        H_{\mathrm{min}}(\mathbf{Z}|\mathbf{XY}E)_{\rho_{\Omega}} \geq& 2n \log \frac{1}{1/2 + \mu} \\
        H_{\mathrm{min}}^{\varepsilon_{s}}(\mathbf{AB}|\mathbf{XY}E)_{\rho_{|\Omega}} \geq& n \eta(\varepsilon_{s}, \varepsilon_{\mathrm{EA}}, S_{\mu}, n, \mu).
    \end{aligned}
    \end{equation}
    The first inequality is satisfied by the condition that our source is an SV-source with bias $\mu$ (see also Lemma 7 in \cite{Kessler2020}). The second inequality is guaranteed by Theorem \ref{thm:dira_hmin}.
\end{proof}

The previous section assumed that the key size is fixed before the protocol begins. In order to maximize the amount of randomness that can be extracted, one would like to adjust the output size depending on the violation of the MDL-inequality (compare Box I). This can be done using the following Theorem
\begin{theorem} Let $\varepsilon, \varepsilon_{s} \in (0, 1)$ be arbitrary. Let $S_{\mu,\mathrm{max}} \in \mathbb{R}$, $M \in \mathbb{N}$ and set $\Delta S_{\mu} = S_{\mu,\mathrm{max}}/M$. Choose the output size as
    \begin{equation}
    \begin{aligned}
        m'(S_{\mu,\mathrm{obs}}) = &\frac{n}{6}\left( \eta \left( \varepsilon_{s}, \varepsilon, \min \left\{ S_{\mu, \mathrm{max}}, \floor*{\frac{S_{\mu,\mathrm{obs}}}{\Delta S_{\mu}}} \Delta S_{\mu} \right\}, n, \mu \right) \right. \\
        & \left. + 2 \log \frac{1}{1/2 + \mu} - 2 \right) - \log \frac{6}{\varepsilon - 6 \varepsilon_{s}} \\
        m(S_{\mu,\mathrm{obs}}) =& \max \{ \floor{m'(S_{\mu,\mathrm{obs}})}, 0 \}.
    \end{aligned}
    \end{equation}
    Then, under the assumptions in section~\hyperref[app:assumptions]{V}, the protocol is $M \varepsilon$ secure.
\end{theorem}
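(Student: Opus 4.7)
The plan is to reduce the adaptive protocol to at most $M$ instances of the already-proven Theorem \ref{thm:security_thm} via a union bound over the finitely many distinct output lengths that the floor in the definition of $m(S_{\mu,\mathrm{obs}})$ can produce. Writing $S_\mu^{(j)} := j \Delta S_\mu$ and $m_j := m(S_\mu^{(j)})$ for $j = 0, 1, \ldots, M$, the flooring operation forces the chosen output size to lie in the finite set $\{m_0, \ldots, m_M\}$. The $j = 0$ bin is trivially secure because $m_0 = 0$ (empty output). Monotonicity of $\eta$ in its third argument implies $m_0 \leq m_1 \leq \cdots \leq m_M$, a structural fact I will exploit below.

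First, for each $j \in \{1, \ldots, M\}$ I would consider the \emph{fixed-length} variant $\mathcal{P}_j$ of the protocol that outputs exactly $m_j$ extractor bits when the event $\Omega_j := \{S_{\mu,\mathrm{obs}} \geq S_\mu^{(j)}\}$ occurs and aborts otherwise. Since $m_j$ was chosen to saturate the extractor constraint \eqref{eq:extractor_condition} at $S_\mu = S_\mu^{(j)}$ with $\varepsilon_{\mathrm{EA}} = \varepsilon$, Theorem \ref{thm:security_thm} applies directly to $\mathcal{P}_j$ and yields
\begin{equation}
\tfrac{1}{2} \bigl\| \rho^{\mathcal{P}_j}_{\mathbf{K X Y Z} E, \Omega_j} - \omega_{m_j} \otimes \rho^{\mathcal{P}_j}_{\mathbf{X Y Z} E, \Omega_j} \bigr\|_1 \leq \varepsilon
\end{equation}
on the sub-normalised state restricted to $\Omega_j$.

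Next I would translate these $M$ fixed-length bounds into a single bound for the adaptive protocol. On the bin event $B_j := \Omega_j \setminus \Omega_{j+1}$ (with $B_M := \Omega_M$) the adaptive protocol is operationally identical to $\mathcal{P}_j$, so their sub-normalised states coincide on $B_j$. Decomposing the output state as $\rho = \sum_j \rho_{|B_j}$ and defining the ideal target $\tilde\rho := \sum_j \omega_{m_j} \otimes \rho_{|B_j, \mathbf{X Y Z} E}$, a triangle inequality over the partition $\{B_j\}_{j=0}^{M}$, combined with the telescoping identity $\rho^{\mathcal{P}_j}_{|B_j} = \rho^{\mathcal{P}_j}_{|\Omega_j} - \rho^{\mathcal{P}_j}_{|\Omega_{j+1}}$, reduces each bin contribution to two invocations of Theorem \ref{thm:security_thm}: one at $S_\mu = S_\mu^{(j)}$ on $\Omega_j$, and one at $S_\mu = S_\mu^{(j+1)}$ but with the shorter output length $m_j$ on $\Omega_{j+1}$. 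The second invocation is legitimate precisely because the monotonicity $m_j \leq m_{j+1}$ guarantees that $m_j$ still satisfies \eqref{eq:extractor_condition} at the larger $S_\mu^{(j+1)}$. Summing the $M$ bin contributions gives the claimed $M \varepsilon$-security, up to an unimportant multiplicative constant that can be absorbed into a redefinition of $\varepsilon$.

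The main obstacle will be the variable output length: the register $\mathbf{K}$ lives in alphabets of different dimension on different bins, so the ideal state must be written as a mixture of $\omega_{m_j}$'s rather than a single maximally mixed state, and Theorem \ref{thm:security_thm} directly controls only the threshold events $\Omega_j$ rather than the disjoint bin events $B_j$ on which the adaptive protocol is naturally described. The monotonicity $j \mapsto m_j$ is the key structural property that closes the telescoping step, by letting Theorem \ref{thm:security_thm} be invoked at any grid point with any shorter output length, which is exactly what is needed to control the $\Omega_{j+1}$ term produced by the telescoping identity.
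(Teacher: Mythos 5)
Your plan is in the same spirit as the paper's proof (partition the MDL value range into $M+1$ bins and union-bound the security error over the bins), but the route you take to bound each bin is different, unnecessarily indirect, and loses a factor of about two that the theorem's statement does not allow you to lose.

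The paper's proof simply defines $\Omega_i$ directly as the \emph{disjoint bin events} (in its notation $i\Delta S_\mu \leq S_{\mu,\mathrm{obs}} < (i+1)\Delta S_\mu$ for $1 \leq i \leq M-1$, with $\Omega_0$ and $\Omega_M$ the two extremal bins), decomposes the security parameter as
\begin{equation*}
\sum_{i=0}^{M}\Pr[\Omega_i]\,\tfrac{1}{2}\bigl\|\rho_{\mathbf{KXYZ}E|\Omega_i}-\omega_{m_i}\otimes\rho_{\mathbf{XYZ}E|\Omega_i}\bigr\|_1,
\end{equation*}
and bounds each of the $M$ nontrivial terms by $\varepsilon$ via a single invocation of Theorem~\ref{thm:security_thm} at $S_\mu = i\Delta S_\mu$ (the $i=0$ term is zero because $m_0=0$), giving exactly $M\varepsilon$. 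You are right that Theorem~\ref{thm:security_thm} is stated for a threshold event $\{S_{\mu,\mathrm{obs}}\geq S_\mu\}$ rather than a bin, and this is presumably what motivated your telescoping; but the underlying entropy accumulation bound (Theorem~\ref{thm:dira_hmin}) continues to hold for any event that implies the frequency constraint, and in particular for the bin event, so the paper's one-invocation-per-bin argument is sound and does not need to be routed through threshold events.

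Your alternative — define auxiliary fixed-length protocols $\mathcal{P}_j$, telescope $\rho_{\wedge B_j}=\rho_{\wedge\Omega_j}-\rho_{\wedge\Omega_{j+1}}$, and apply Theorem~\ref{thm:security_thm} twice per bin (once at $S_\mu^{(j)}$ with output $m_j$, once at $S_\mu^{(j+1)}$ with the same shorter output $m_j$, the latter valid by monotonicity of $\eta$) — does work as a proof strategy, and it has the cosmetic advantage of using Theorem~\ref{thm:security_thm} strictly in its stated form. But it costs roughly two invocations per bin, giving $(2M-1)\varepsilon$ after carefully accounting for the terminal bin $B_M=\Omega_M$. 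That is not the claimed $M\varepsilon$, and "absorb the factor into $\varepsilon$" is not an option here because $\varepsilon$ was already fixed when $m'$ was defined through the $\log\frac{6}{\varepsilon-6\varepsilon_s}$ term; changing $\varepsilon$ changes the output length. So as written your proof establishes a weaker statement than the theorem claims. To recover the stated bound you should drop the telescoping and directly apply the accumulation bound to the bin event, as the paper does.
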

\begin{proof}
    For $i \in \{1, \ldots, M - 1\}$, define $\Omega_{i}$ as the event when the observed MDL violation satisfies $i \Delta S_{\mu} \leq S_{\mu, \mathrm{obs}} < (i + 1) \Delta S_{\mu}$, $\Omega_{0}$ as the event when $S_{\mu,\mathrm{obs}} < \Delta S_{\mu}$, and $\Omega_{M}$ as the event when $S_{\mu, \mathrm{obs}} \geq S_{\mu, \mathrm{max}}$. Define $m_{i} = m(i\Delta S_{\mu})$, then the security parameter can be bounded by
    \begin{equation}
    \begin{aligned}
        & \sum_{i=0}^{M} \mathrm{Pr}[\Omega_{i}] \frac{1}{2} \| \rho_{\mathbf{KXYZ}E|\Omega_{i}} - \omega_{m_{i}} \otimes \rho_{\mathbf{XYZ}E|\Omega_{i}} \|_{1} \leq M \varepsilon,
    \end{aligned}
    \end{equation}
    where we used that for $i = 0$ the trace distance vanishes (because $m_{0} = 0$ and hence $\mathbf{K}$ is trivial) and used Theorem \ref{thm:security_thm} to bound the terms for $i = 1, \ldots, M$ by $\varepsilon$.
\end{proof}

\section{VII. Extracted random bits}
 Figure~\ref{fig:bits} shows the main output of this experiment: the 20,431,465 $\epsilon$-random bits produced via device-independent randomness amplification. Each pixel represents a single extracted bit.
Furthermore, as a visual demonstration of the random nature of the bits, in Figure~\ref{fig:sheep} we present the image of a Valais Blacknose sheep (left), and the same image after one-time padding it with a set of the generated bits from this experiment (right).

\begin{figure*}
  \begin{center} \includegraphics[width=2.0\columnwidth]{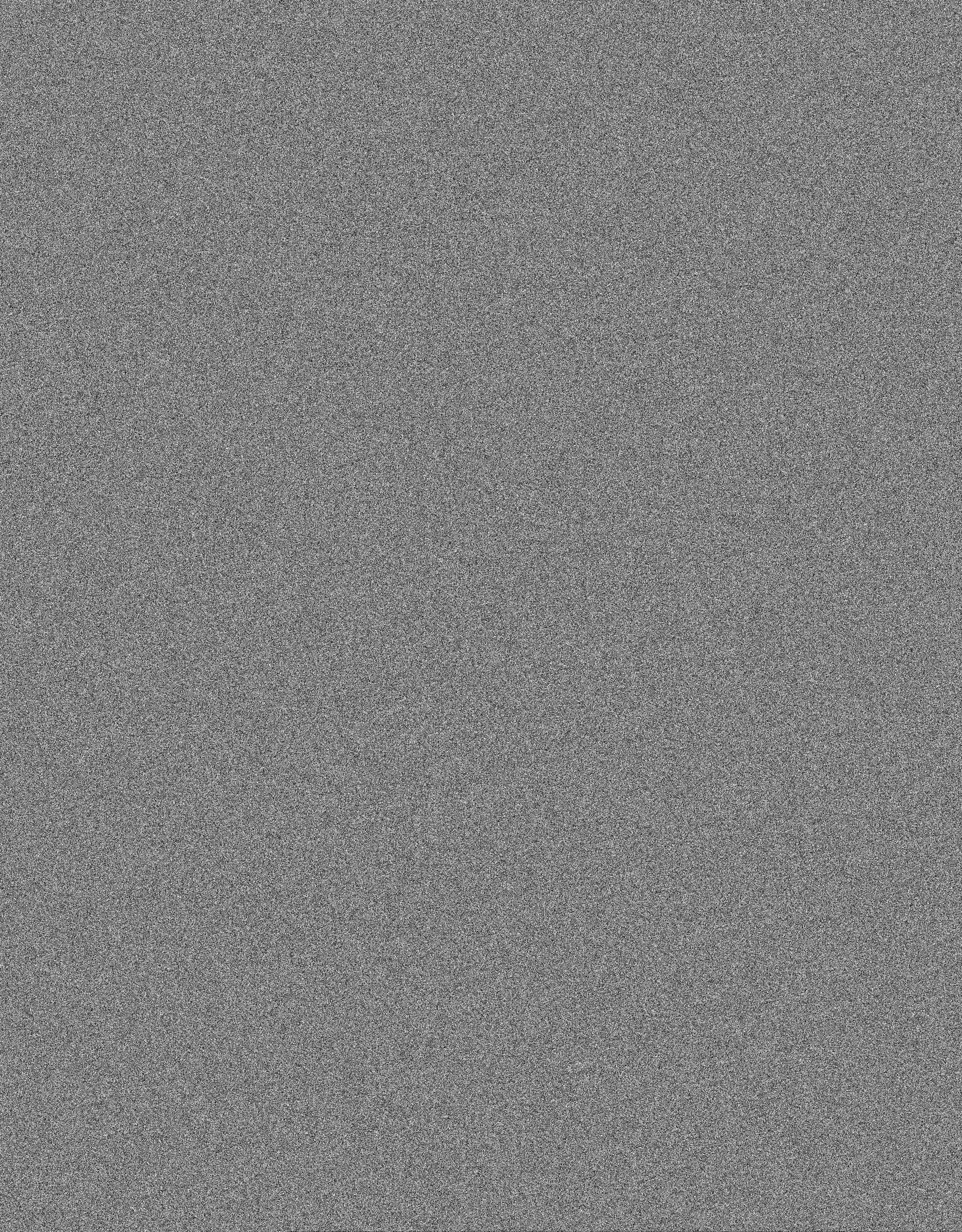}
  \caption{Extracted 20,431,465 bits, arranged row by row from left to right, and from top to bottom. A white pixel represents a "1", a black pixel marks a "0".}
  \label{fig:bits}
  \end{center}
\end{figure*}

\begin{figure*}
    \centering
    \includegraphics[angle=90,width=\columnwidth]{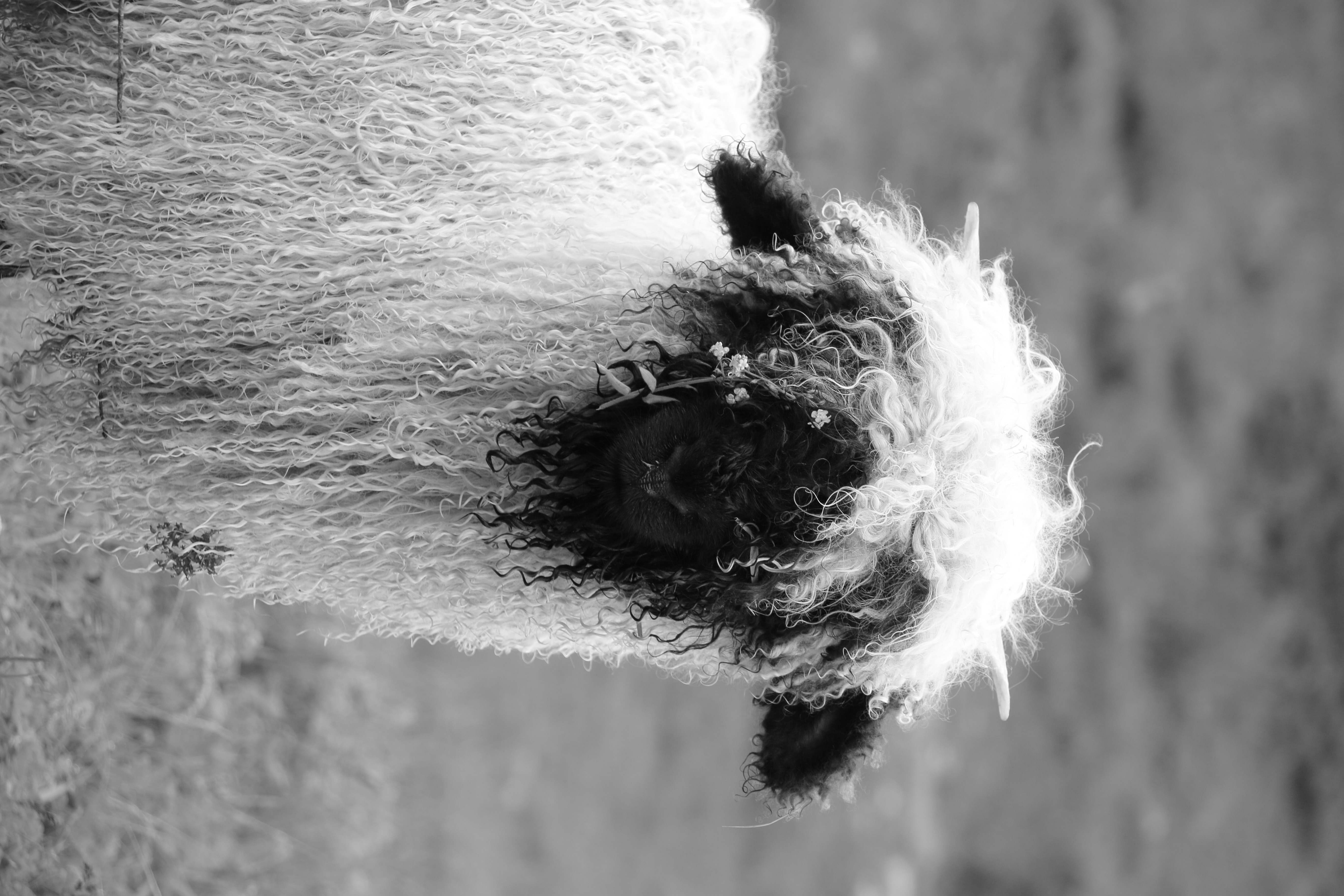}
    \includegraphics[angle=90,width=\columnwidth]{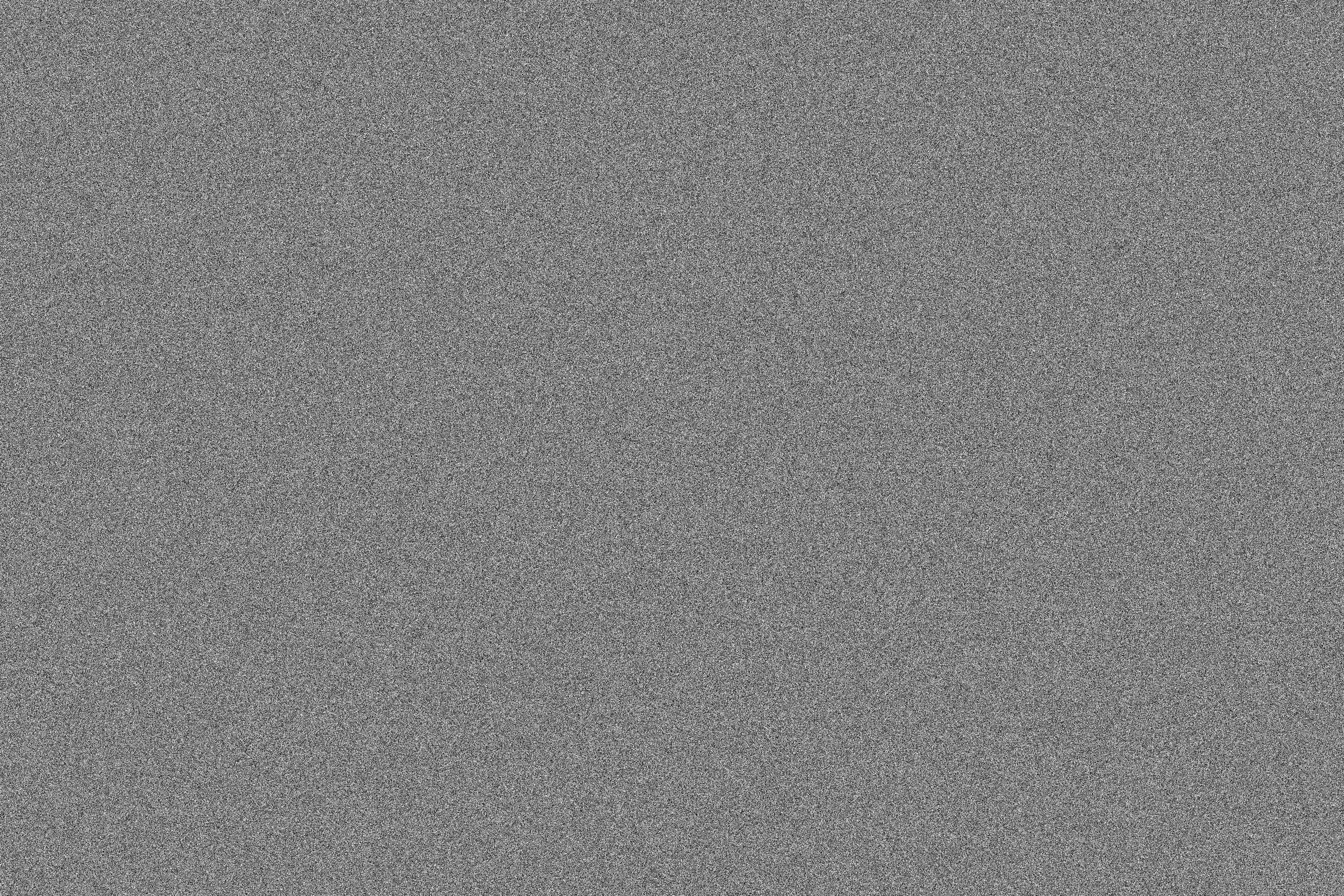}
    \caption{Sheep and one-time padded sheep.}
    \label{fig:sheep}
\end{figure*}

\clearpage

\bibliographystyle{apsrev4-1}
\bibliography{RefDB/basic_refs,RefDB/extra_refs,RefDB/DIRA_refs,RefDB/qudevRefDB_part}

\end{document}